\newcommand{\ket}[1]{\ensuremath{\vert#1\rangle}}
\newcommand{\bra}[1]{\ensuremath{\langle #1|}}
\newcommand{\XZ}{\ensuremath{X\!Z}}
\newcommand{\ens}[0]{\ensuremath}
\newcommand{\SkPr}[2]{\ens{\left\langle#1|#2\right\rangle}} 
\newcommand{\symp}[2]{\ens{\langle#1,#2\rangle_{\mathrm{sp}}}} 
\newcommand{\x}[0]{\ens{\otimes}} 
\newcommand{\Fkt}[3]{\ens{#1 : #2 \rightarrow #3}} 
\newcommand{\iE}[0]{\ens{\mathrm{i}}}
\newcommand{\EZ}[0]{\ens{\mathrm{e}}} 
\newcommand{\Eins}[0]{\ens{\mathbbm{1}}}
\newcommand{\ny}[0]{\ens{\nu}} 
\newcommand{\F}[0]{\ens{\mathbb{F}}}
\newcommand{\N}[0]{\ens{\mathbb{N}}}
\newcommand{\C}[0]{\ens{\mathbb{C}}}
\newcommand{\cB}[0]{\ens{\mathcal{B}}}
\newcommand{\cC}[0]{\ens{\mathcal{C}}}
\newcommand{\cH}[0]{\ens{\mathcal{H}}}
\newcommand{\fC}[0]{\ens{\mathfrak{C}}}
\newcommand{\fS}[0]{\ens{\mathfrak{S}}}
\newcommand{\Mge}[2]{\ens{\left\lbrace #1|\,#2 \right\rbrace}}
\newcommand{\Mg}[1]{\ens{\left\lbrace #1 \right\rbrace}}
\newcommand{\MgN}[1]{\ens{\Mg{0,\dots,#1}}}
\newcommand{\MgE}[1]{\ens{\Mg{1,\dots,#1}}}
\newcommand{\betrag}[1]{\ens{\left|#1\right|}}
\newcommand{\Erz}[1]{\ens{\langle #1 \rangle}}
\begin{document}

\title{Cyclic mutually unbiased bases, Fibonacci polynomials and Wiedemann's conjecture}
\titlerunning{Cyclic MUBs, Fibonacci polynomials and Wiedemann's conjecture}

\author{Ulrich Seyfarth\inst{1}
  \and Kedar S. Ranade\inst{2}}
\institute{Institut f\"ur Angewandte Physik, Technische Universit\"at Darmstadt, Hochschulstra\ss{}e 4a,
  64289 Darmstadt, Germany, \email{ulrich.seyfarth@physik.tu-darmstadt.de} \and Institut f\"ur Quantenphysik, Universit\"at Ulm, Albert-Einstein-Allee 11, 89081 Ulm, Germany, \email{kedar.ranade@uni-ulm.de}}
\date{Dated: March 27th, 2012}

%

\def\makeheadbox{}
\maketitle

\begin{abstract}
  We relate the construction of a complete set of cyclic mutually unbiased bases, i.\,e., mutually unbiased bases
  generated by a single unitary operator, in power-of-two dimensions to the problem of finding
  a symmetric matrix over $\F_2$ with an irreducible characteristic polynomial that has a given Fibonacci index.
  For dimensions of the form $2^{2^k}$ we present a solution that shows an analogy to an open conjecture
  of Wiedemann in finite field theory. Finally, we discuss the equivalence of mutually unbiased bases.
\end{abstract}

\section{Introduction}\label{sec:prelim}
\emph{Mutually unbiased bases} (MUBs) are used in various contexts in quantum information theory. The construction
of MUBs is connected to the theory of finite fields \cite{WF89,KR04} and to construction problems of Latin squares
\cite{PDB09} or of maximally commuting bases of orthogonal unitary matrices \cite{BBRV02}. In this
article we aim to construct complete sets of \emph{cyclic} mutually unbiased bases \cite{Chau,Gow}, which exist
only for even prime-power dimensions \cite{Appleby}. We start with the definition of MUBs.
\begin{definition}[Mutually unbiased bases]\label{def:MUB}\hfill\\
  Let $\cH = \C^d$ be a $d$-dimensional complex Hilbert space. Two orthonormal bases
  $\cB_1 = \Mg{\ket{\psi_1},\,\dots,\,\ket{\psi_d}}$ and $\cB_2 = \Mg{\ket{\phi_1},\,\dots,\,\ket{\phi_d}}$
  of $\cH$ are said to be \emph{mutually unbiased}, if there holds $\betrag{\SkPr{\psi_i}{\phi_j}}^2 = d^{-1}$
  for all $i,\,j \in \MgE{d}$.
\end{definition}
The standard examples for MUBs are the $z$-, $x$- and $y$-basis for $d = 2$ or any basis and its Fourier
transform for any dimension $d$. It is well-known that the maximum number of pairwise
mutually unbiased bases of $\C^d$, denoted $N(d)$, is upper-bounded by $d+1$. If $d$ is a prime
power, then $N(d) = d+1$ bases can be constructed by using methods from finite field theory.
In any other dimension, the precise number $N(d)$ is unknown.
\par A basis of $\cH = \C^d$ can always be identified with a unitary matrix $U$ on this space, since the
columns of such matrix define an orthonormal basis and vice versa. Since $U^2$, $U^3$ etc. are unitary,
they also define bases of $\cH$.
\par In the following, let $\N$ be the natural numbers and $\N_0 := \N \cup \Mg{0}$.
For a finite field of order $q$, we write $\F_q$, for the complex field $\C$, and for the $n \times n$-matrices
over an arbitrary field (or ring) $K$, we write $M_n(K)$.
\begin{definition}[Cyclic mutually unbiased bases]\hfill\\
A set $\Mg{\cB_0,\,\dots,\,\cB_{n-1}}$ of mutually unbiased bases is called \emph{cyclic}, if there exists
a unitary matrix $U \in M_d(\C)$ of finite order $n \in \N$, such that the columns of the matrices
$U,\,U^2,\,U^3,\,\dots,U^n = U^0 = \Eins_d$ coincide with the bases $\Mg{\cB_0,\,\dots,\,\cB_{n-1}}$.
\end{definition}
We are interested in \emph{complete sets of cyclic MUBs}, i.\,e. sets of $N(d)$ cyclic MUBs, and want
to construct a unitary generator $U$. Since $N(d)$ is known for prime powers only and it was shown that such
operators cannot exist for odd~$d$~\cite{Appleby}, we shall only consider $d = 2^m$ for some $m \in \N$.
(We will keep this notion of $d$ and $m$ throughout this article.)

\section{Methods and formalism}\label{sec:constr}
In this section, we introduce the methods which are relevant for our treatment of cyclic MUBs. We will
start with a subsection on Fibonacci polynomials and then relate the construction of cyclic MUBs
to properties of a particular matrix.

\subsection{Fibonacci polynomials}
Our analysis of complete sets of cyclic MUBs is closely related to the properties of the so-called
\emph{Fibonacci polynomials}.\footnote{Note that these polynomials differ by one from the $f_n$
used previously \cite{KRS10}, i.\,e. $F_n = f_{n-1}$.}
\begin{definition}[Fibonacci polynomials]\hfill\\
  Over an arbitrary field $K$, we define the \emph{Fibonacci polynomials} $(F_n)_{n \in \N_0}$,
  by $F_0(x) := 0$, $F_1(x) := 1$ and the recursion $F_{n+1}(x) := x \cdot F_n(x) + F_{n-1}(x)$.
\end{definition}
If $K = \C$, the sequence $(F_n(1))_{n \in \N_0}$ is the usual Fibonacci sequence. In the context of
cyclic MUBs we exclusively deal with the ground field $K = \F_2$ and possibly its extensions and make
use of a block matrix $C = {\tiny \begin{pmatrix} B & \Eins \\ \Eins & 0 \end{pmatrix}} \in M_{2m}(\F_2)$
with $\Eins,\,0 \in M_m(\F_2)$ and some $B \in M_m(\F_2)$. The powers of $C$ are easily seen to be of the form
\begin{equation}\label{C-Fib}
  C^k = \begin{pmatrix} F_{k+1}(B) & F_k(B) \\ F_k(B) & F_{k-1}(B) \end{pmatrix} \in M_{2m}(\F_2),
\end{equation}
where the Fibonacci polynomials naturally appear. The Fibonacci numbers satisfy several well-known divisibility
relations \cite[pp. 67--69]{Scheid} and we will need their counterparts for Fibonacci polynomials over
$K = \F_2$. We start with a definition.
\begin{definition}[Fibonacci index]\hfill\\
  The \emph{Fibonacci index} (sometimes \emph{depth}) of an irreducible polynomial
  $p \in K[x]$ is defined as the minimum number $d \in \N$, such that $p$ divides $F_d$.
\end{definition}
Several properties of these polynomials are discussed by Goldwasser et al. \mbox{\cite{GKT97,GKW02}};
for example, $F_n$ divides $F_m$, if and only if $n$ divides $m$ for $n,\,m \in \N$. In the proof of the
following lemma \cite[lem. 5/th. 7(a)]{GKW02}, we use the fact that there holds $F_{n-t}(x) + F_{n+t}(x)
= x F_n(x) F_t(x)$ over $\F_2$ for $n,\,t \in \N$ and $t \leq n$ \cite[lem. 4(2)]{GKT97}.
\begin{lemma}[Fibonacci polynomials and Fibonacci index]\label{lem:allpol}\hfill\\
  For every $m \in \N$, the product $F_{2^m+1}(x) \cdot F_{2^m-1}(x)$ is equal to the square of the
  product of all irreducible polynomials---except $x$---whose degree divides $m$.
  Therefore, the Fibonacci index of an irreducible polynomial $p \in \F_2[x]$ of degree~$m$ is well-defined
  and, for $p(x) \neq x$, is odd and divides either $2^m+1$ or $2^m-1$.
\end{lemma}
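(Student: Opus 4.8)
The plan is to split the statement into (a) the polynomial identity $F_{2^m+1}(x)\cdot F_{2^m-1}(x)=(x^{2^m-1}+1)^2$ over $\F_2$, (b) the identification of $x^{2^m-1}+1$ with the product of all irreducibles $\ne x$ of degree dividing $m$, and (c) the short deduction of the \emph{therefore} clause.

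For (a) I would first record two auxiliary identities over $\F_2$. Putting $t=n$ in the relation $F_{n-t}(x)+F_{n+t}(x)=x\,F_n(x)F_t(x)$ and using $F_0=0$ gives the doubling formula $F_{2n}(x)=x\,F_n(x)^2$; since $F_2(x)=x$, an induction on $m$ then yields $F_{2^m}(x)=x^{2^m-1}$. Reading the matrix identity \eqref{C-Fib} over the ring $\F_2[x]$ with $B$ the $1\times1$ matrix $(x)$ and taking determinants (note $\det C=-1=1$ in $\F_2[x]$) gives the Cassini-type identity $F_{k+1}(x)F_{k-1}(x)+F_k(x)^2=1$. Combining the two at $k=2^m$ and using that squaring is additive in characteristic $2$, one gets $F_{2^m+1}F_{2^m-1}=F_{2^m}^2+1=(x^{2^m-1})^2+1=(x^{2^m-1}+1)^2$.

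For (b) I invoke the classical fact that $x^{2^m}-x=x^{2^m}+x$ is the product of all monic irreducible polynomials over $\F_2$ whose degree divides $m$; dividing out the factor $x$ shows that $x^{2^m-1}+1$ is the (squarefree) product of all such irreducibles other than $x$. Hence $F_{2^m+1}F_{2^m-1}$ is precisely the square of that product, which is the first assertion. For (c): if $p$ is irreducible of degree $m$ with $p\ne x$, then $p$ is one of those factors, so $p\mid(x^{2^m-1}+1)^2=F_{2^m+1}F_{2^m-1}$, and since $\F_2[x]$ is a UFD and $p$ is prime, $p$ divides $F_{2^m+1}$ or $F_{2^m-1}$; in particular the Fibonacci index $e$ of $p$ is well-defined. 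Oddness is immediate: $e\ge 2$ because $F_1=1$, and if $e=2f$ the doubling formula gives $p\mid x\,F_f^2$, hence $p\mid F_f$ with $f<e$, contradicting minimality. Finally, the standard \emph{rank-of-apparition} property---that $\{n:p\mid F_n\}$ is exactly the set of multiples of $e$---turns $p\mid F_{2^m\pm1}$ into $e\mid 2^m+1$ or $e\mid 2^m-1$.

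The computational core is elementary; the step that needs the most care is the last one, i.e.\ upgrading ``$p\mid F_n$'' to ``$e\mid n$'' rather than merely $e\le 2^m+1$. This rests on the strong-divisibility property $\gcd(F_a,F_b)=F_{\gcd(a,b)}$ of Fibonacci polynomials over $\F_2$, which is not literally among the facts recalled above and should be cited from \cite{GKT97,GKW02} or derived from the addition formula $F_{m+n}=F_{m+1}F_n+F_mF_{n-1}$ together with $\gcd(F_{n-1},F_n)=1$.
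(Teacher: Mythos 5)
Your proposal is correct, and it uses the same ingredients as the paper's proof: the fact that $x^{q^m}-x$ is the product of all monic irreducibles over $\F_q$ of degree dividing $m$, the evaluation $F_{2^m}(x)=x^{2^m-1}$, and the identity $F_{n-t}+F_{n+t}=xF_nF_t$. The only real difference is how the central identity $F_{2^m+1}F_{2^m-1}=(x^{2^m-1}+1)^2$ is extracted: the paper substitutes $n=2^m+1$, $t=n-2$ directly, so that $F_{n-t}=F_2=x$ and $F_{n+t}=F_{2^{m+1}}=x^{2^{m+1}-1}$, and then divides by $x$; you instead specialize to $t=n$ to get the doubling formula $F_{2n}=xF_n^2$, derive the Cassini-type identity $F_{k+1}F_{k-1}+F_k^2=1$ from determinants of \eqref{C-Fib}, and combine the two via the Frobenius. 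Both are short and valid; the paper's is one line shorter, while yours has the side benefit that the doubling and Cassini identities are exactly what you reuse later, for the oddness of the index and for $\gcd(F_e,F_{e+1})=1$ in the rank-of-apparition argument. On the concluding clause you are in fact more complete than the paper, which disposes of oddness and of $e\mid 2^m\pm1$ by citing Goldwasser et al.\ and only records the coprimality of $F_{2^m-1}$ and $F_{2^m+1}$; your instinct that the step from ``$p\mid F_{2^m\pm1}$'' to ``$e\mid 2^m\pm1$'' needs the rank-of-apparition property (derived from the addition formula $F_{a+b}=F_{a+1}F_b+F_aF_{b-1}$, or cited as strong divisibility) is exactly right, and your oddness argument via the doubling formula is correct.
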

\begin{proof}
  We set $n = 2^m+1$ and $t = n-2$ and use that (i) over any finite field~$\F_q$ the product of all
  monic irreducible polynomials whose degree divide some $m \in \N$ is given by $x^{q^m}-x$
  \cite[th. 3.20]{LN08}, and (ii) $F_{2^m}(x) = x^{2^m-1}$ \cite[lem. A.3]{KRS10}. Note further that $F_{2^m-1}$
  and $F_{2^m+1}$ are coprime \cite[prop. 5]{GKT97}. \qed
\end{proof}
The following lemma was proven by Sutner \cite[th.~3.1]{S00}.
\begin{lemma}[Irreducible factors of $F_{2^m\pm 1}$]\label{thm:sutner}\hfill\\
  Let $p = \sum_{i = 0}^{m} a_i x^i \in \F_2[x]$ be an irreducible polynomial of order $m \in \N$. Then, $p$ divides
  $F_{2^m+1}$, if $a_1 = 1$; otherwise it divides $F_{2^m-1}$.
\end{lemma}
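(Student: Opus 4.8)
The plan is to pass to the field $\F_{2^m}=\F_2[x]/(p)$, fix a root $\beta$ of $p$ there, and decide whether $F_{2^m+1}(\beta)=0$ or $F_{2^m-1}(\beta)=0$. Since $p\neq x$ we have $\beta\neq0$ and $\F_2(\beta)=\F_{2^m}$, and by Lemma~\ref{lem:allpol} the polynomial $p$ divides the product $F_{2^m+1}F_{2^m-1}$ of two coprime factors, hence divides exactly one of them; so it suffices to determine, in terms of $a_1$, which of $F_{2^m\pm1}(\beta)$ vanishes.

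First I would record a Binet-type formula. Over a suitable extension of $\F_2$, let $\lambda$ be a root of the auxiliary quadratic $q(\lambda):=\lambda^2+\beta\lambda+1$; then its other root is $\lambda^{-1}$ (the product of the roots is the constant term $1$), we have $\lambda+\lambda^{-1}=\beta$, and $\lambda\neq0$, $\lambda\neq\lambda^{-1}$ (both because $\beta\neq0$, in characteristic $2$). A short induction on the Fibonacci recursion --- equivalently, diagonalising the $2\times2$ matrix $\begin{pmatrix}\beta&1\\1&0\end{pmatrix}$, whose $k$-th power is $\begin{pmatrix}F_{k+1}(\beta)&F_k(\beta)\\F_k(\beta)&F_{k-1}(\beta)\end{pmatrix}$ by \eqref{C-Fib} --- yields $F_k(\beta)=(\lambda^k-\lambda^{-k})/(\lambda-\lambda^{-1})$. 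In particular, $\lambda^N=1$ implies $F_N(\beta)=0$.

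Next I would split into two cases according to whether $q$ is irreducible over $\F_{2^m}$. If $q$ is reducible, then $\lambda\in\F_{2^m}$, so $\lambda^{2^m-1}=1$ and thus $p\mid F_{2^m-1}$. If $q$ is irreducible, then $\lambda\in\F_{2^{2m}}$ with $\F_{2^m}$-conjugate $\lambda^{2^m}$, which must be the second root $\lambda^{-1}$; hence $\lambda^{2^m+1}=\lambda\cdot\lambda^{2^m}=1$ and thus $p\mid F_{2^m+1}$.

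The remaining and, I expect, hardest step is to identify the irreducibility of $q$ over $\F_{2^m}$ with the condition $a_1=1$. Substituting $\lambda=\beta z$ turns $q$ into the Artin--Schreier form $z^2+z+\beta^{-2}$, which is irreducible over $\F_{2^m}$ if and only if $\tr(\beta^{-2})=1$, where $\tr$ denotes the absolute trace $\F_{2^m}\to\F_2$; and since $\tr$ is Frobenius-invariant, $\tr(\beta^{-2})=\tr(\beta^{-1})$. Finally, the $\F_2$-conjugates of $\beta^{-1}$ are the reciprocals of those of $\beta$, i.e.\ the roots of the reciprocal polynomial $\tilde p(x)=a_0x^m+a_1x^{m-1}+\dots+a_{m-1}x+a_m$; since $p\neq x$ is monic we have $a_0=a_m=1$, so $\tilde p$ is the monic minimal polynomial of $\beta^{-1}$ with next coefficient $a_1$, whence $\tr(\beta^{-1})=a_1$ (recall $-a_1=a_1$ in characteristic $2$). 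Chaining these equivalences gives $a_1=1$ iff $q$ is irreducible over $\F_{2^m}$ iff $p\mid F_{2^m+1}$, and otherwise $p\mid F_{2^m-1}$. The points that need care are the Artin--Schreier criterion, the reciprocal-polynomial bookkeeping, and the degenerate case $m=1$ (where $F_{2^m-1}=F_1=1$ and, consistently, the only non-$x$ irreducible of degree $1$, namely $x+1$, has $a_1=1$); the Binet formula and the eigenvalue dichotomy are routine once one notes $\beta\neq0$.
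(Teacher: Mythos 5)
The paper does not actually prove this lemma --- it only cites Sutner's Theorem~3.1 --- so there is no internal proof to compare against, and your argument stands as a genuinely self-contained addition. It is correct. The Binet formula $F_k(\beta)=(\lambda^k-\lambda^{-k})/(\lambda-\lambda^{-1})$ with $\lambda^2+\beta\lambda+1=0$ is easily checked by induction (the cross terms cancel in characteristic $2$, and $\lambda\neq\lambda^{-1}$ precisely because $\beta\neq0$); the dichotomy $\lambda\in\F_{2^m}\Rightarrow\lambda^{2^m-1}=1$ versus $\lambda^{2^m}=\lambda^{-1}\Rightarrow\lambda^{2^m+1}=1$ is the standard splitting/inert alternative for the quadratic $q$ over $\F_{2^m}$; and the reduction of irreducibility of $q$ to $\tr(\beta^{-2})=\tr(\beta^{-1})=1$ via the Artin--Schreier substitution $\lambda=\beta z$, together with the identification $\tr(\beta^{-1})=a_1$ through the reciprocal polynomial (using $a_0=a_m=1$ since $p\neq x$ is monic irreducible), is all sound. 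Two small points worth making explicit: the lemma as stated silently excludes $p(x)=x$ (for which it is false, as $x$ divides neither $F_{2\pm1}$), which your hypothesis $\beta\neq0$ correctly enforces; and the two cases are genuinely exclusive because $F_{2^m-1}$ and $F_{2^m+1}$ are coprime, as already noted in the proof of Lemma~\ref{lem:allpol}. This is essentially the classical argument behind Sutner's result, so it matches the cited source in spirit while making the paper self-contained.
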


\subsection{Finding a stabilizer matrix}\label{subsec:findc}
Bandyopadhyay et al. \cite{BBRV02} showed that the problem of finding MUBs on $\C^d$ can be reduced to finding
certain partitions of operator bases of $M_d(\C)$ (cf. also the appendix). The problem of constructing cyclic MUBs
can accordingly be reduced to finding a suitable symplectic \emph{stabilizer matrix} $C \in M_{2m}(\F_2)$, where
$d = 2^m$ \cite{KRS10}. Numerically, it was shown that, at least for $m \leq 24$, this matrix may be chosen of the form
\begin{align}\label{eqn:defC}
  C = \begin{pmatrix} B & \Eins_m\\ \Eins_m & 0_m \end{pmatrix} \in M_{2m}(\F_2)
\end{align}
with $B \in M_m(\F_2)$, which we may call \emph{reduced stabilizer matrix}, and we will see in this article that this form
can be maintained for all $m \in \N$. (We summarize the explicit construction
of the generating unitary $U$ from such $B$ in the appendix.) Using the Fibonacci
polynomials, we demand that $C^{d+1}=\Eins_{2m}$, i.\,e.
\begin{align}
  C^{d+1} = \begin{pmatrix}
  F_{d+2}(B) & F_{d+1}(B)\\
  F_{d+1}(B) & F_{d}(B)
\end{pmatrix} = \begin{pmatrix}
  \Eins_m & 0_m\\
  0_m & \Eins_m
\end{pmatrix}.
\end{align}
We thus require $F_{d}(B)=\Eins_m$ and $F_{d+1}(B)=0_m$. By the recursion relation, the condition
$F_{d+2}(B) = \Eins_m$ is automatically fulfilled. The conditions on $B$ are the following \cite{KRS10}:
\begin{enumerate}[(i)]
  \item $B$ is symmetric, i.\,e. $B = B^t$,
  \item $F_k(B)$ is invertible for $k \in \MgE{d}$,
  \item $F_d(B) = B^{d-1} = \Eins_m$.
\end{enumerate}
Since the minimal polynomial of a matrix $B$ is the normalized polynomial of minimal degree that
annihilates $B$, we see that it is a factor of $F_{d+1}$.
\begin{lemma}[Characteristic polynomial of reduced stabilizer matrices]\label{CP-RSM}\hfill\\
  A matrix $B \in M_m(\F_2)$, $m \in \N$, that is at first annihilated by
  $F_{2^m - 1}$ or $F_{2^m + 1}$, has an irreducible characteristic polynomial. The characteristic
  polynomial thus coincides with the minimal polynomial of $B$.
\end{lemma}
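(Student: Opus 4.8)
The plan is to work through the minimal polynomial $\mu_B$ and to show that the hypothesis forces $\mu_B$ to be irreducible of degree $m$; since $\mu_B\mid\chi_B$ and both are monic of degree $m=\deg\chi_B$, this yields $\mu_B=\chi_B$ and proves both assertions at once. I will carry this out for the case in which $B$ is first annihilated by $F_{2^m+1}$; the case of $F_{2^m-1}$ runs the same way with $2^m-1$ in place of $2^m+1$.

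First I would unpack the hypothesis. Since $F_{2^m+1}(B)=0$, we have $\mu_B\mid F_{2^m+1}$, so by Lemma~\ref{lem:allpol} every irreducible factor $p$ of $\mu_B$ satisfies $\deg p\mid m$, is distinct from $x$, and divides $F_{2^m+1}$; hence its Fibonacci index $d_p$ divides $2^m+1$. On the other hand, for a reduced stabilizer matrix the phrase ``$B$ is first annihilated by $F_{2^m+1}$'' is exactly condition (ii) together with $F_{2^m+1}(B)=0$, so $F_k(B)$ is invertible — equivalently $\gcd(F_k,\mu_B)=1$, i.e. no irreducible factor of $\mu_B$ divides $F_k$ — for every $k\le 2^m$; thus $d_p>2^m$ for every such $p$. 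Combining the two observations, each $d_p$ is a divisor of $2^m+1$ that exceeds $2^m$, hence $d_p=2^m+1$ for every irreducible factor $p$ of $\mu_B$. Now the last clause of Lemma~\ref{lem:allpol} gives $2^m+1=d_p\mid 2^{\deg p}+1$ or $2^{\deg p}-1$, so $2^m+1\le 2^{\deg p}+1$ and therefore $\deg p\ge m$; since $\deg p\le\deg\mu_B\le m$, this forces $\deg p=m$. So every irreducible factor of $\mu_B$ already has degree $m$, whence $\mu_B$ is itself irreducible of degree $m$, and $\mu_B=\chi_B$ as above; in particular $\chi_B$ is irreducible.

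The one place where any real difficulty could hide is the passage from the hypothesis to ``$d_p>2^m$ for every $p$''. Read in isolation, ``first annihilated by $F_{2^m+1}$'' delivers only $F_{2^m+1}(B)=0$ together with $F_k(B)\neq0$ for $k<2^m+1$, which yields merely $\operatorname{lcm}\{d_p\}=2^m+1$ rather than $d_p=2^m+1$ individually; ruling out a reducible $\chi_B$ in that generality amounts to showing that a least common multiple of indices $d_p\mid 2^{\delta}\pm1$ with $\delta\mid m$ and $\delta<m$ can never equal $2^m\pm1$, which is essentially Zsygmondy's theorem on primitive prime divisors (and which would in addition require the identity $F_{2n}=xF_n^2$ over $\F_2$, forcing $v_p(F_{d_p})=2$, to deal with repeated factors of $\mu_B$). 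This is, however, immediate in the paper's own setting $m=2^k$, where $m$ has no proper divisor $\delta$ with $m/\delta$ odd, and it is circumvented altogether once the invertibility built into condition (ii) is used, which is why I would present the argument in the streamlined form above.
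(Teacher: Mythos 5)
Your proof rests on the same engine as the paper's --- Lemma~\ref{lem:allpol} controlling the degrees and Fibonacci indices of the irreducible factors of $F_{2^m\pm1}$ --- but you actually carry out the step that the paper's proof merely asserts. The paper declares that $B$ ``is annihilated by a polynomial $p$ with Fibonacci index $2^m\pm1$ that has degree $m$ and is therefore irreducible''; you derive this, by pinning down the index of \emph{each} irreducible factor $p$ of $\mu_B$ to exactly $2^m+1$ and then using $d_p\mid 2^{\deg p}\pm1$ to force $\deg p=m$. Your diagnosis of where the difficulty sits is also right: ``first annihilated by $F_{2^m+1}$'' alone only controls a least common multiple of the $d_p$, and importing the invertibility of $F_k(B)$ for $k\le 2^m$ (condition (ii), which is in force wherever the lemma is used) to get $d_p>2^m$ factor by factor is a legitimate and clean way to close that gap; the paper's proof does not address it at all.

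One caveat on your closing claim that the $F_{2^m-1}$ case ``runs the same way'': for $m=2$ your inequality $2^m-1\le 2^{\deg p}+1$ only yields $\deg p\ge1$, and in fact the lemma fails there as stated. The symmetric matrix $B=\bigl(\begin{smallmatrix}0&1\\1&0\end{smallmatrix}\bigr)\in M_2(\F_2)$ has $F_1(B)=\Eins$, $F_2(B)=B$ invertible and $F_3(B)=B^2+\Eins=0$, so it is first annihilated by $F_{2^2-1}=F_3=(x+1)^2$, yet $\chi_B=(x+1)^2$ is reducible --- the culprit being that $x+1$ has degree $1$ but Fibonacci index $3=2^1+1=2^2-1$. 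For $m\ge3$ your estimate does close the case (any proper divisor $\delta$ of $m$ satisfies $2^\delta+1<2^m-1$), and the paper only ever invokes the lemma with index $d+1=2^m+1$, so nothing downstream is affected; but the exception deserves to be flagged rather than absorbed into ``the same way''.
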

\begin{proof}
  If the first Fibonacci polynomial that annihilates $B$ equals $F_{2^m \pm 1}$, $B$ is annihilated by
  a polynomial $p$ with Fibonacci index $2^m \pm 1$ that has degree $m$ and is therefore irreducible.
  Since the degree of $p$ coincides with the dimension of $B$, it is the irreducible characteristic polynomial
  of $B$. \qed
\end{proof}

\begin{lemma}[Multiplicative order of $B$]\label{lem:thirdCondition}\hfill\\
  If the characteristic polynomial of $B$ has Fibonacci index $d+1$, then $F_{d} (B) = \Eins_m$.
\end{lemma}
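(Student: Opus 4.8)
The plan is to recast the matrix identity $F_d(B)=\Eins_m$ as a divisibility statement about the characteristic polynomial $p$ of $B$. Since the Fibonacci index in the excerpt is defined only for irreducible polynomials, the hypothesis already tells us that $p$ is irreducible; being the characteristic polynomial of an $m\times m$ matrix it has degree $m$, hence it coincides with the minimal polynomial of $B$ (in agreement with Lemma~\ref{CP-RSM}). Moreover $p\neq x$: the index $d+1=2^m+1$ is odd, so $x\nmid F_{d+1}$, whereas $p\mid F_{d+1}$. It therefore suffices to show $p\mid F_d(x)+1$ in $\F_2[x]$, because evaluating at $B$ and using $p(B)=0$ then gives $F_d(B)+\Eins_m=0$, i.e. $F_d(B)=\Eins_m$ (over $\F_2$ there is no sign to track).

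First I would pass to a root $\beta$ of $p$ in an algebraic closure of $\F_2$. Because $p$ is irreducible of degree $m$ we have $\F_2(\beta)=\F_{2^m}$, and since $\beta\neq 0$, $\beta$ lies in the multiplicative group $\F_{2^m}^{\times}$, whose order is $2^m-1=d-1$. Hence $\beta^{\,d-1}=1$, and as $p$ is the minimal polynomial of $\beta$ over $\F_2$ this forces $p\mid x^{\,d-1}-1$. Now I invoke the identity $F_{2^m}(x)=x^{2^m-1}$ over $\F_2$, already used in the proof of Lemma~\ref{lem:allpol}; with $d=2^m$ it reads $F_d(x)=x^{\,d-1}$, so $F_d(x)+1=x^{\,d-1}+1=x^{\,d-1}-1$, which is divisible by $p$. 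This is exactly the divisibility identified above, so evaluating at $B$ finishes the proof.

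I do not expect a genuine obstacle here; the only points deserving care are (a) recognising that the hypothesis ``Fibonacci index $d+1$'' is precisely what licenses treating $p$ as an irreducible degree-$m$ polynomial (so that $B$ becomes multiplication by an element of $\F_{2^m}^{\times}$), and that in fact only irreducibility of the characteristic polynomial, not the exact value of the index, is used; and (b) keeping the bookkeeping $d=2^m$ straight. As an alternative that bypasses the field picture, one can use the Cassini-type identity $F_{d-1}(x)F_{d+1}(x)-F_d(x)^2=(-1)^d$ (valid over $\Z[x]$, hence over $\F_2$, where $(-1)^d=1$): this gives $F_{d-1}(x)F_{d+1}(x)=\bigl(F_d(x)+1\bigr)^2$ over $\F_2$, and since $p\mid F_{d+1}$ with $p$ prime we get $p\mid F_d(x)+1$, so evaluation at $B$ again yields $F_d(B)=\Eins_m$.
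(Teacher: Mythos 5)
Your main argument is correct and is essentially the paper's own proof: the hypothesis forces $\chi_B$ to be irreducible of degree $m$ (Lemma~\ref{CP-RSM}), so $B$ generates $\F_{2^m}$ and satisfies $B^{d-1}=\Eins_m$ as an element of the multiplicative group, and the identity $F_{2^m}(x)=x^{2^m-1}$ finishes the job; your extra care in checking $p\neq x$ (via the oddness of $d+1$) is a detail the paper leaves implicit. Your Cassini-type alternative, $F_{d-1}F_{d+1}=(F_d+1)^2$ over $\F_2$ combined with $p\mid F_{d+1}$ and irreducibility of $p$, is a genuinely different and arguably cleaner route: it uses the Fibonacci-index hypothesis directly as a divisibility statement and avoids the field-theoretic detour entirely (it also drops out of the paper's own equation \eqref{C-Fib} by taking determinants of $C^k$). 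The field picture buys the explicit statement $B^{d-1}=\Eins_m$, which the paper reuses as condition (iii); the Cassini route buys economy and stays entirely inside $\F_2[x]$.
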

\begin{proof}
  By lemma \ref{CP-RSM}, $\chi_B$ is irreducible and its splitting field over $\F_2$ is isomorphic to $\F_d$
  with $d = 2^m$. By the Hamilton-Cayley theorem, $B$ is a root of $\chi_B$ and thus, as an element of the
  multiplicative group of a finite field of order $d$, there holds $B^{d-1} = \Eins_m$. The assertion now
  follows from $F_d(B)=B^{d-1}$ \cite[lem. A.3]{KRS10}. \qed
\end{proof}

\begin{proposition}[Representation of fields]\label{FieldRep}\hfill\\
  Let $L/K$ be a field extension of order $n \in \N$ and let $A \in M_n(K)$ be a matrix, such that
  its characteristic polynomial $\chi_A \in K[x]$ is irreducible (and therefore minimal). Then, the set
  $\Mge{f(A)}{f \in K[x]}$ is isomorphic to the extension field $L$.
\end{proposition}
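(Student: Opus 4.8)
The plan is to exhibit an explicit isomorphism from the polynomial evaluation ring $\{f(A) : f \in K[x]\}$ onto $L$. First I would observe that the evaluation map $\varphi : K[x] \to M_n(K)$, $f \mapsto f(A)$, is a ring homomorphism whose image is exactly the set in question; hence by the first isomorphism theorem the image is isomorphic to $K[x]/\ker\varphi$. The kernel is the ideal of all polynomials annihilating $A$, which is generated by the minimal polynomial of $A$. Since $\chi_A$ is irreducible and annihilates $A$ (Hamilton--Cayley), it coincides with the minimal polynomial up to the unit normalisation, so $\ker\varphi = (\chi_A)$ and the image is $K[x]/(\chi_A)$. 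Because $\chi_A$ is irreducible over $K$, this quotient is a field, and it is an extension of $K$ of degree $\deg\chi_A = n$.

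Next I would identify this field with $L$. Both $K[x]/(\chi_A)$ and $L$ are extensions of $K$ of degree $n$. In the situation of interest $K = \F_2$, so finite fields of a given order are unique up to isomorphism, and $K[x]/(\chi_A) \cong \F_{2^n} \cong L$; this already suffices for the application to stabilizer matrices. For the general statement as phrased, one uses that $\chi_A$, being irreducible of degree $n$ over $K$, splits in any degree-$n$ extension of $K$ only if that extension is generated by one of its roots — so one should either restrict to the case where $L$ is the splitting field of $\chi_A$ (equivalently $L = K(\alpha)$ for a root $\alpha$ of $\chi_A$), in which case $K[x]/(\chi_A) \to L$, $x + (\chi_A) \mapsto \alpha$, is the desired isomorphism, or invoke finiteness of the fields. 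Given that the only later use is for $K = \F_2$ and $L \cong \F_{2^m}$ via Lemma \ref{CP-RSM}, I would phrase the proof for finite fields and note the uniqueness of $\F_{q^n}$.

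The main (and essentially only) obstacle is the slight imprecision in the statement: for a general field extension $L/K$ of degree $n$, an irreducible degree-$n$ polynomial $\chi_A$ need not have a root in $L$, so $K[x]/(\chi_A)$ need not embed in that particular $L$; one genuinely needs either $L$ to be the splitting field of $\chi_A$ or the finite-field uniqueness theorem. Everything else is routine: the homomorphism property of evaluation, the identification of the kernel with the minimal polynomial ideal, Hamilton--Cayley to see $\chi_A$ annihilates $A$, and the fact that $K[x]$ modulo an irreducible polynomial is a field. So in the write-up I would spend one sentence on the evaluation homomorphism and its kernel, one on why the quotient is a degree-$n$ field extension of $K$, and one on matching it to $L$ via uniqueness of finite fields (the only case used), keeping the argument to a few lines.
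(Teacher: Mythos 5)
Your proposal is correct and follows essentially the same route as the paper's own (very terse) proof: both use Hamilton--Cayley to see that $A$ is a root of the irreducible polynomial $\chi_A$, so that $\Mge{f(A)}{f \in K[x]} = K[A] \cong K[x]/(\chi_A)$ is a degree-$n$ field extension of $K$, identified with $L$. Your added remark that the identification with an \emph{arbitrary} degree-$n$ extension $L$ requires either $L$ to contain a root of $\chi_A$ or the uniqueness of finite fields is a fair point that the paper's one-line proof silently elides, but it does not change the substance of the argument in the only case used ($K = \F_2$).
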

Note that the set of polynomials in $A$ effectively includes only polynomials of degree less than $n$; it
also includes the zero polynomial with degree $-\infty$.
\begin{proof}
  By the Hamilton-Cayley theorem, there holds $\chi_A(A) = 0$ as matrix equality, i.\,e. $A$ is a root of
  the irreducible polynomial~$\chi$. By adjoining $A$ to the field $K$, we obtain the extension $L$
  \cite[p.\,66\,ff.]{LN08}. \qed
\end{proof}
With the help of this proposition and lemma \ref{lem:thirdCondition} we can replace conditions (ii) and (iii)
on $B$ from above by
\begin{enumerate}[(ii')]
 \item The Fibonacci index of the characteristic polynomial $\chi_B$ of $B$ is $d+1$.
\end{enumerate}
The following theorem shows that for every dimension $d = 2^m$ we can find a reduced stabilizer matrix $B$
and thus a unitary generator $U$ of a complete set of cyclic MUBs.
\begin{theorem}[Existence of reduced stabilizer matrices]\label{MainTheorem1}\hfill\\
 For any dimension $d = 2^m$ with $m \in \N$ there exists a reduced stabilizer matrix $B \in M_m(\F_2)$ that
 fulfills conditions (i) and (ii').
\end{theorem}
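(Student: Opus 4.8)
The plan is to work entirely with the reformulated requirements: by the discussion preceding the theorem it suffices to produce, for each $m$, a symmetric matrix $B \in M_m(\F_2)$ whose characteristic polynomial $\chi_B$ is irreducible of degree $m$ and has Fibonacci index exactly $d+1 = 2^m+1$ (irreducibility then being automatic by Lemma~\ref{CP-RSM}, and conditions (i), (ii') being exactly this). So the argument splits into two essentially independent tasks: (A) construct the right polynomial, and (B) realize it by a \emph{symmetric} matrix over $\F_2$.

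For task (A) I would \emph{not} pick an arbitrary degree-$m$ irreducible factor of $F_{2^m+1}$: Sutner's Lemma~\ref{thm:sutner} guarantees such factors exist (those with $a_1=1$), but their Fibonacci index can be a proper divisor of $2^m+1$, so they need not satisfy (ii'). Instead, fix a primitive $(2^m+1)$-th root of unity $\zeta$ in $\overline{\F_2}$ (possible since $2^m+1$ is odd) and set $\beta := \zeta + \zeta^{-1}$. Thinking of $\beta$ as a $1\times 1$ block in (\ref{C-Fib})---equivalently, using the Binet-type identity $F_k(\beta) = (\zeta^k-\zeta^{-k})/(\zeta-\zeta^{-1})$, valid over $\F_2$ since $\zeta^2\neq 1$---one gets $F_k(\beta)=0 \iff \zeta^{2k}=1 \iff (2^m+1)\mid k$, so the minimal polynomial $p$ of $\beta$ has Fibonacci index precisely $2^m+1$. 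A short Galois check shows $\deg p = m$: factoring $\beta^{2^s}+\beta = (\zeta^{2^s}+\zeta)(1+\zeta^{-2^s-1})$ over $\F_2$, the relation $\beta^{2^s}=\beta$ forces $2^m+1\mid 2^s-1$ or $2^m+1\mid 2^s+1$, which is impossible for $1\le s<m$ but holds for $s=m$; hence $\beta$ lies in $\F_{2^m}$ but in no proper subfield, i.e. $[\F_2(\beta):\F_2]=m$.

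For task (B) I would realize $p$ inside $L := \F_{2^m}$, regarded as an $m$-dimensional $\F_2$-vector space; multiplication by $\beta$ is then an $\F_2$-linear map whose characteristic polynomial is $p$ (equal to its minimal polynomial since $p$ is irreducible of degree $m = \dim_{\F_2} L$). On $L$ take the trace form $\langle x,y\rangle := \tr_{L/\F_2}(xy)$. It is symmetric and non-degenerate, and multiplication by $\beta$ is self-adjoint for it since $\tr(\beta xy)$ is symmetric in $x,y$. The crucial point is that this form is \emph{not alternating}: over $\F_2$ one has $\langle x,x\rangle = \tr(x^2) = \tr(x)$, which takes the value $1$ for suitable $x$. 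A non-degenerate, non-alternating symmetric bilinear form over $\F_2$ is congruent to $\Eins_m$, i.e. admits an orthonormal basis (classification of such forms; alternatively a direct induction, peeling off a norm-one vector). In such a basis the matrix $B$ of multiplication by $\beta$ satisfies $B^t = B$ and has characteristic polynomial $p$; thus $B$ meets conditions (i) and (ii'), which proves the theorem.

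I expect task (B) to be the main obstacle, precisely because of the characteristic-two subtlety: a symmetric bilinear form over $\F_2$ may be alternating, and one must check that the trace form escapes this (via $\tr(x^2)=\tr(x)$) and then invoke the normal form $\langle\cdot,\cdot\rangle\cong\Eins_m$ for non-alternating non-degenerate forms. Task (A) is then a routine computation once one commits to primitive $(2^m+1)$-th roots of unity rather than to Sutner's factors, and the passage from conditions (ii)/(iii) to (ii') has already been settled before the theorem.
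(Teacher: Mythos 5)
Your proof is correct, but it takes a genuinely different route from the paper's, which is much shorter because it delegates both halves of the argument to the literature. For your task (A), the paper simply quotes the counting formula of Goldwasser, Klostermeyer and Ware: the number of degree-$m$ polynomials over $\F_2$ with Fibonacci index $2^m+1$ equals $\phi(2^m+1)/(2m)$, which is non-zero. Your explicit construction of such a polynomial as the minimal polynomial of $\beta = \zeta + \zeta^{-1}$ for a primitive $(2^m+1)$-th root of unity $\zeta$ is essentially the mechanism behind that count (the $\phi(2^m+1)$ primitive roots fall into orbits of size $2m$ under $\zeta \mapsto \zeta^{-1}$ and Frobenius), and your warning that Sutner's lemma alone does not suffice---a degree-$m$ factor of $F_{2^m+1}$ with $a_1=1$ may have Fibonacci index a proper divisor of $2^m+1$---is a correct and worthwhile observation that the paper's citation quietly sidesteps. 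For task (B), the paper invokes Brawley and Teitloff's lemma that every monic polynomial over a finite field is the characteristic polynomial of a symmetric matrix similar to its companion matrix; your trace-form argument (multiplication by $\beta$ on $\F_{2^m}$ is self-adjoint for $\langle x,y\rangle = \tr(xy)$, which is non-degenerate and non-alternating because $\tr(x^2)=\tr(x)$, hence congruent to $\Eins_m$) is a valid self-contained substitute and close in spirit to how such realization results are actually proved; you are right that the characteristic-two classification of symmetric forms is the one point that needs care. The net effect: the paper's proof is two citations long, while yours is self-contained modulo standard facts about the trace and the normal form of non-alternating symmetric bilinear forms over $\F_2$, and as a bonus it produces an explicit $B$ rather than a pure existence statement.
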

\begin{proof}
  We have to find a symmetric matrix $B \in M_m(\F_2)$ with irreducible characteristic polynomial and 
  Fibonacci index $d+1$. It is known that the number of polynomials in $\F_2[x]$ with degree $m$
  and Fibonacci index $d+1$ is given by $\frac{\phi(d+1)}{2m}$, where $\phi$ is Euler's totient function
  \cite[th. 8]{GKW02}. Since this expression is non-zero, there exists at least one polynomial for any dimension
  $d = 2^m$ that has maximal Fibonacci index $d+1$.
  \par It is well-known that any polynomial $p \in K[x]$ possesses a companion matrix, whose characteristic
  polynomial is $p$. For every monic polynomial over a finite field $K = \F_q$ there exists a symmetric matrix
  that is similar to the companion matrix and therefore has the same characteristic polynomial \cite[lem. 2]{BT98}.
  \qed
\end{proof}

\section{Explicit construction of the reduced stabilizer matrix for $m=2^k$}
In this section, we deal with a specific solution of $B$ for $m=2^k$ and $k \in \N$. We start by presenting
an iterated construction of $B$, which is followed by a discussion on the characteristic polynomial
and its Fibonacci index. We show that this construction is related to an open conjecture
in finite field theory by Wiedemann \cite{W88} (cf. also \cite[prob.~28]{MS95}), that is still of current
interest~\cite{V10}.
\par Let us recall some already known solutions \cite{KRS10}: For $k=1$ there exist two different forms of $B$
(by permutation), one of them being ${\tiny B_2 = \begin{pmatrix} 1 & 1 \\ 1 & 0 \end{pmatrix}} \in M_2(\F_2)$.
If we set $k=2$ a possible choice of $B$ is the block matrix ${\tiny B_4 = \begin{pmatrix} B_2 & \Eins\\
\Eins & 0 \end{pmatrix}} \in M_4(\F_2)$. By iterating this construction, we get a recursion
\begin{align}\label{eqn:B2k}
B_{2^k} = \begin{pmatrix} 
  B_{2^{k-1}} & \Eins\\
  \Eins & 0
\end{pmatrix} \in M_{2^k}(\F_2),
\end{align}
and we shall show that it is suitable for our purposes.

\subsection{Irreducible self-reciprocal polynomials and the characteristic polynomial of $B_{2^k}$}
In this subsection we want to evaluate the characteristic polynomials of the matrices $B_{2^k}$ of \eqref{eqn:B2k}.
It will turn out that the notion of reciprocal and self-reciprocal polynomials are of interest in the
context of determining $\chi_{B}$.
\begin{definition}[Reciprocal polynomial]\hfill\\
  The \emph{reciprocal} of a polynomial $f \in K[x]$ is defined by $f^*(x) := x^n f(x^{-1})$, where
  $n \in \N$ is the degree of $f$. If $f = f^*$, it is called \emph{self-reciprocal}.
\end{definition}
By applying the \emph{reciprocal operator} $Q$ on an arbitrary polynomial $f$ of degree $n$,
we get a self-reciprocal polynomial as
\begin{align}
  f^Q(x) := x^n f\left(x+x^{-1}\right).
\end{align}
Varshamov and Garakov \cite{VG69} have shown, that in the case of $f \in \F_2[x]$ and $f$ being irreducible,
$f^Q$ again is irreducible if and only if the linear coefficient of $f$ does not vanish; see also Meyn \cite{Meyn90}.
We can now relate the notion of reciprocal polynomials to our construction from \eqref{eqn:B2k}.
\begin{lemma}[Characteristic polynomials]\label{CharPol}\hfill\\
  Let $K$ be a finite field of characteristic 2 and $S \in M_n(K)$ with characteristic polynomial $\chi_S$.
  The characteristic polynomial of the matrix $S^\prime := {\tiny \begin{pmatrix} S & \Eins \\ \Eins & 0 \end{pmatrix}} \in M_{2n}(K)$
  is given by $\chi_{S^\prime} = (\chi_S)^Q$.
\end{lemma}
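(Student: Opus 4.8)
The plan is to compute the characteristic polynomial of $S^{\prime}$ directly by block elimination and then recognise the outcome as $x^{n}\chi_{S}(x+x^{-1})$. Since $-1=1$ in $K$, we have
$$x\Eins_{2n}-S^{\prime}=\begin{pmatrix} x\Eins_n+S & \Eins_n\\ \Eins_n & x\Eins_n\end{pmatrix},$$
a matrix over $K[x]$ whose bottom two blocks $\Eins_n$ and $x\Eins_n$ commute. Hence the standard block-determinant identity $\det\begin{pmatrix}A&B\\C&D\end{pmatrix}=\det(AD-BC)$ (valid whenever $CD=DC$) applies with $A=x\Eins_n+S$, $B=C=\Eins_n$, $D=x\Eins_n$, and yields $\chi_{S^{\prime}}(x)=\det\bigl(x(x\Eins_n+S)+\Eins_n\bigr)=\det(x^{2}\Eins_n+xS+\Eins_n)$, again using $\mathrm{char}\,K=2$ to turn $-\Eins_n$ into $+\Eins_n$.

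The second step is to extract the reciprocal structure. Passing to the rational function field $K(x)$, in which $x$ is invertible, I factor $x$ out of each of the $n$ columns: $x^{2}\Eins_n+xS+\Eins_n=x\bigl((x+x^{-1})\Eins_n+S\bigr)$, so that $\chi_{S^{\prime}}(x)=x^{n}\det\bigl((x+x^{-1})\Eins_n+S\bigr)$. Since $\det(y\Eins_n+S)=\det(y\Eins_n-S)=\chi_{S}(y)$ in characteristic $2$, this reads $\chi_{S^{\prime}}(x)=x^{n}\chi_{S}(x+x^{-1})$, which is precisely $(\chi_{S})^{Q}(x)$ by the definition of the operator $Q$ (note $\deg\chi_{S}=n$). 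Both $\chi_{S^{\prime}}$ and $(\chi_{S})^{Q}$ lie in $K[x]$ and have been shown to agree in $K(x)$, hence they agree in $K[x]$; a degree count ($2n$, monic, on both sides) confirms consistency.

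The computation is short, and the only points needing a little care are bookkeeping rather than substance: (a) all sign ambiguities in the characteristic-polynomial and block-determinant conventions collapse because $\mathrm{char}\,K=2$, which is what makes the formula come out symmetric; and (b) the factorization $x^{2}\Eins_n+xS+\Eins_n=x\bigl((x+x^{-1})\Eins_n+S\bigr)$ and the scaling rule $\det(xM)=x^{n}\det M$ must be read over $K(x)$, since $x$ is a non-unit in $K[x]$, even though the resulting identity is one of genuine polynomials. If one prefers to stay inside $K[x]$, an alternative is to expand both $\det(x^{2}\Eins_n+xS+\Eins_n)$ and $x^{n}\chi_{S}(x+x^{-1})$ as explicit polynomials in $x$ and match coefficients; a further alternative is to argue over $\overline{K}$ via the observation that $\mu$ is a root of $\chi_{S^{\prime}}$ if and only if $\mu+\mu^{-1}$ is a root of $\chi_{S}$, but controlling multiplicities under this two-to-one substitution makes the determinant route the cleanest.
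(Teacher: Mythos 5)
Your proof is correct and follows essentially the same route as the paper: both compute $\det(x\Eins_{2n}-S^\prime)$ via a block-determinant identity, factor out $x^n$ over $K(x)$, and identify the result as $x^n\chi_S(x+x^{-1})=(\chi_S)^Q(x)$. The only cosmetic difference is that you invoke $\det(AD-BC)$ for commuting lower blocks where the paper uses the Schur-complement form $\det(D)\det(A-BD^{-1}C)$ --- these coincide here --- and your explicit remark that the intermediate manipulation lives in $K(x)$ is a minor tightening of the paper's argument.
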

In the following proof, we use the well-known fact that for block matrices there holds
${\tiny \det \begin{pmatrix} A & B \\ C & D \end{pmatrix} = \det(D) \det(A - BD^{-1}C)}$, if $D$ is invertible,
which follows from the decomposition ${\tiny \begin{pmatrix} A & B \\ C & D \end{pmatrix} = \begin{pmatrix} \Eins & B
\\ 0 & D \end{pmatrix} \bigl(\begin{matrix} A - BD^{-1}C & 0 \\ D^{-1}C & \Eins \end{matrix}\bigr)}$, because the
determinant of block-triangular matrices is the product of the determinants of the blocks.
\begin{proof}
  We calculate $\chi_{S^\prime}(x) = \det(x\Eins_{2n} - S^\prime)
      = \det {\tiny \begin{pmatrix} x\Eins_n - S & \Eins_n \\ \Eins_n & x\Eins_n \end{pmatrix}}$
  which yields $\chi_{S^\prime}(x) = \det(x\Eins_n) \det(x\Eins_n - S - x^{-1}\Eins_n)
  = x^n \cdot \chi_S(x - x^{-1})$. Since $\mathrm{char}\,K = 2$, this is the reciprocal of $\chi_S$.\qed
\end{proof}
Given an arbitrary polynomial of degree $n$, $f(x) = \sum_{k = 0}^{n} a_k x^k$, after some easy calculations
we find $f^Q(x) = \sum_{k = 0}^{n}  \sum_{i = 0}^{k} \binom{k}{i} a_k x^{n+k-2i}$. A contribution to the linear coefficient
arises only from $k = i = n-1$, so that it is given by $a_{n-1}$.
A corollary thereof is the following.
\begin{corollary}[Minimal polynomial of $B_{2^k}$]\label{CharPolB}\hfill\\
 The characteristic polynomial of $B_{2^k} \in M_{2^k}(\F_2)$ as in \eqref{eqn:B2k}, $k \in \N$,
 is given by $f^{Q^k}(x)$ with $f(x)=1+x$. Therefore it is irreducible and minimal.
\end{corollary}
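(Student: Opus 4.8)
The plan is to prove the statement by induction on $k$, using Lemma~\ref{CharPol} together with the Varshamov--Garakov criterion on the preservation of irreducibility under the operator $Q$. I would first establish the base case $k=1$: here $B_2 = {\tiny \begin{pmatrix} 1 & 1 \\ 1 & 0 \end{pmatrix}}$, and a direct computation of $\det(x\Eins_2 - B_2) = x^2 + x + 1$ shows $\chi_{B_2} = x^2 + x + 1$. On the other hand, $f(x) = 1+x$ has degree~$1$, so $f^Q(x) = x \cdot f(x + x^{-1}) = x(1 + x + x^{-1}) = x + x^2 + 1$, which matches $\chi_{B_2}$. Since $x^2+x+1$ is the unique irreducible quadratic over $\F_2$, the base case is settled, including irreducibility.

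For the inductive step, suppose $\chi_{B_{2^{k-1}}} = f^{Q^{k-1}}(x)$ for $k \geq 2$. By the recursion~\eqref{eqn:B2k}, $B_{2^k} = {\tiny \begin{pmatrix} B_{2^{k-1}} & \Eins \\ \Eins & 0 \end{pmatrix}}$, so applying Lemma~\ref{CharPol} with $S = B_{2^{k-1}}$ and $n = 2^{k-1}$ gives $\chi_{B_{2^k}} = (\chi_{B_{2^{k-1}}})^Q = (f^{Q^{k-1}})^Q = f^{Q^k}$, which is the claimed formula. It remains to verify that $f^{Q^k}$ is irreducible. By the inductive hypothesis $f^{Q^{k-1}} = \chi_{B_{2^{k-1}}}$ is irreducible (this is what we carry along in the induction), so by the Varshamov--Garakov result quoted before the lemma, $f^{Q^k} = (f^{Q^{k-1}})^Q$ is irreducible if and only if the linear coefficient of $f^{Q^{k-1}}$ is nonzero. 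But the formula derived just before this corollary shows that for any polynomial $g$ of degree $n$, the linear coefficient of $g^Q$ equals $a_{n-1}$, the second-highest coefficient of $g$; and since $g^Q$ is monic (leading coefficient $a_n = 1$ carried through), in particular the polynomial $f^{Q^{k-1}}$ being monic of degree $2^{k-1}$ has a well-defined coefficient $a_{2^{k-1}-1}$. To see this coefficient is nonzero, observe that $f^{Q^{k-1}}$ is self-reciprocal (it is in the image of $Q$ for $k \geq 2$), hence its second coefficient equals its second-to-last coefficient, namely the linear coefficient—and an irreducible polynomial over $\F_2$ of degree $\geq 2$ cannot have vanishing constant term, so more care is needed. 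The clean argument is: $f^{Q^{k-1}}$ is irreducible of degree $2^{k-1} \geq 2$ and self-reciprocal; a self-reciprocal irreducible polynomial of degree $\geq 2$ over $\F_2$ has nonzero linear coefficient (if the linear coefficient vanished, by self-reciprocity the coefficient of $x^{n-1}$ would vanish too, and one checks directly this is incompatible with irreducibility; alternatively $f = 1+x$ has nonzero linear coefficient and this property is inherited). Thus the hypothesis of the Varshamov--Garakov criterion holds and $f^{Q^k}$ is irreducible.

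Finally, minimality follows from irreducibility together with the fact that the characteristic polynomial of $B_{2^k} \in M_{2^k}(\F_2)$ has degree exactly $2^k = \dim B_{2^k}$: an irreducible polynomial that annihilates $B_{2^k}$ and has degree equal to the size of the matrix must be the minimal polynomial (the minimal polynomial divides the characteristic polynomial and shares its roots, so in the irreducible case they coincide).

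The main obstacle I anticipate is the verification that the linear coefficient of $f^{Q^{k-1}}$ is nonzero, i.e.\ that the hypothesis of the Varshamov--Garakov theorem is actually met at each stage—this is exactly the point where the iteration could break down, and handling it requires the self-reciprocity of the iterates together with the observation that irreducibility over $\F_2$ forces the constant term (hence, by self-reciprocity, the linear term) to be nonzero. Once that is in place, everything else is bookkeeping with Lemma~\ref{CharPol} and the degree count.
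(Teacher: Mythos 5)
Your overall strategy is the same as the paper's: induction on $k$ via Lemma~\ref{CharPol}, with irreducibility propagated by the Varshamov--Garakov criterion and minimality obtained from the degree count. The base case, the application of Lemma~\ref{CharPol}, and the minimality argument are all fine. The one step that does not stand as written is your ``clean argument'' for the nonvanishing of the linear coefficient: the claim that every self-reciprocal irreducible polynomial of degree at least $2$ over $\F_2$ has nonzero linear coefficient is false. A counterexample is $x^6+x^3+1=\Phi_9(x)$, which is irreducible over $\F_2$ (the multiplicative order of $2$ modulo $9$ is $6=\phi(9)$), is self-reciprocal, and has vanishing linear coefficient. So ``one checks directly this is incompatible with irreducibility'' cannot be carried out; irreducibility is a red herring at this point, as is the constant term, which you also invoke.

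The correct repair is the one you mention only in passing (``this property is inherited''), and it is exactly what the paper's sentence preceding the corollary is for: for $g$ of degree $n$, the linear coefficient of $g^Q$ equals $a_{n-1}$, the coefficient of $x^{n-1}$ in $g$. Carry along the invariant that $f^{Q^j}$ is self-reciprocal with nonzero linear coefficient (true for $j=1$, where $f^Q=x^2+x+1$). Then the linear coefficient of $f^{Q^{j+1}}$ equals the coefficient of $x^{n-1}$ in $f^{Q^j}$, which by self-reciprocity of $f^{Q^j}$ equals its linear coefficient and hence is nonzero; and $f^{Q^{j+1}}$ is again self-reciprocal, being in the image of $Q$. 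This closes the induction without any appeal to irreducibility, and Varshamov--Garakov then applies at every stage. With that substitution your proof coincides with the paper's.
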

\begin{proof}
  By induction, the first part follows immediately from lemma \ref{CharPol}. According to Varshamov-Garakov,
  all polynomials are irreducible, provided that the linear coefficient does not vanish. This is guaranteed
  by the statements preceding this corollary.\qed
\end{proof}
This corollary implies that every $F_l(B_{2^k})$ is either invertible or zero, since by proposition \ref{FieldRep}
it is a representative of an element of some finite field. However, we do not yet know anything about the Fibonacci
index of the characteristic polynomial of $B_{2^k}$, but we shall relate it to an open conjecture in finite field theory
in the following subsection.

\subsection{Fibonacci index of $\chi_B$ and Wiedemann's conjecture}
Wiedemann \cite{W88} considered iterated quadratic extensions of $\F_2$ using generators
$x_{j+1} + x_{j+1}^{-1} = x_j$ for $j \in \N_0$ and $x_0 + x_0^{-1} = 1$; these extensions read $E_0 := \F_2(x_0)$,
$E_1 := E_0(x_1)$ etc., where $E_n$ is isomorphic to $\F_{2^{2^{n+1}}}$. He then showed that the order of $x_n$
divides the $n$-th Fermat number $\mathcal{F}_n = 2^{2^n} + 1$. Since the Fermat numbers are mutually coprime, the order
of $\tilde{x} := x_0x_1\dots x_n \in E_n$ is the product of the orders of the $x_i$. If the order of each $x_i$
is $\mathcal{F}_i$ for $i \leq n$, then $\tilde{x}$ is primitive in $E_n$, i.\,e. a generator of the multiplicative group of
$E_n$. This is \emph{Wiedemann's conjecture}, which is verified numerically for $n \leq 8$ \cite[p. 291/295]{W88}.
It turns out that our matrices $C_{2^k}$ can be seen as a realization of the $x_k$.
\begin{theorem}[Wiedemann analogy]\label{thm:wiedemann}\hfill\\
  Wiedemann's conjecture is true, if and only if the characteristic polynomial of every $B_{2^k}$ as
  in \eqref{eqn:B2k} has Fibonacci index $2^{2^k}+1$.
\end{theorem}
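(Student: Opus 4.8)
The plan is to realize Wiedemann's tower $x_0, x_1, x_2, \dots$ by the matrix tower $B_2, B_4, B_8, \dots$ of \eqref{eqn:B2k}, and to identify the Fibonacci index of $\chi_{B_{2^k}}$ with the multiplicative order of the \emph{next} matrix $B_{2^{k+1}}$. Put $f(x) = 1+x$, so that $\chi_{B_{2^k}} = f^{Q^k}$ by Corollary~\ref{CharPolB}, and observe that $B_{2^{k+1}}$ is exactly a matrix of the form $C$ with block $B = B_{2^k}$, so that \eqref{C-Fib} governs its powers. As a guiding picture, a short computation with $C^{-1}$ shows that $B_{2^{k+1}} + B_{2^{k+1}}^{-1}$ is the block-diagonal matrix with both diagonal blocks equal to $B_{2^k}$, the exact analogue of $x_{k+1} + x_{k+1}^{-1} = x_k$, while $B_2^2 + B_2 + \Eins = 0$ (Cayley--Hamilton, since $\chi_{B_2} = x^2+x+1$) is the analogue of $x_0 + x_0^{-1} = 1$.

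\emph{Identifying $x_k$.} I would make this rigorous on the level of minimal polynomials, showing by induction on $k \in \N_0$ that the minimal polynomial of $x_k$ over $\F_2$ is $f^{Q^{k+1}} = \chi_{B_{2^{k+1}}}$. For $k = 0$, the relation $x_0 + x_0^{-1} = 1$ gives $x_0^2 + x_0 + 1 = 0$ and $x^2+x+1 = f^Q$. For the step, write $\mu := f^{Q^{k+1}}$ for the minimal polynomial of $x_k$; then $x_{k+1} + x_{k+1}^{-1} = x_k$ gives $\mu^Q(x_{k+1}) = x_{k+1}^{\deg\mu}\,\mu(x_{k+1} + x_{k+1}^{-1}) = x_{k+1}^{\deg\mu}\,\mu(x_k) = 0$, and since $\mu^Q = f^{Q^{k+2}} = \chi_{B_{2^{k+2}}}$ is irreducible by Corollary~\ref{CharPolB}, it is the minimal polynomial of $x_{k+1}$. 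Hence $x_k$ and $B_{2^{k+1}}$ have the same irreducible minimal polynomial over $\F_2$; realized inside fields isomorphic to $\F_{2^{2^{k+1}}}$ (Proposition~\ref{FieldRep}) they are conjugate elements, so $\operatorname{ord}(x_k) = \operatorname{ord}(B_{2^{k+1}})$.

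\emph{Order versus Fibonacci index.} Put $B := B_{2^k}$ and let $\iota_k$ denote the Fibonacci index of $\chi_{B_{2^k}}$. Since $\chi_B$ is the minimal polynomial of $B$ (Corollary~\ref{CharPolB}), $\chi_B \mid F_l$ is equivalent to $F_l(B) = 0$, so $\iota_k$ is the least $l \ge 1$ with $F_l(B) = 0$. By \eqref{C-Fib} the off-diagonal blocks of $B_{2^{k+1}}^l$ are $F_l(B)$, so $B_{2^{k+1}}^l = \Eins$ forces $F_l(B) = 0$ and hence $\operatorname{ord}(B_{2^{k+1}}) \ge \iota_k$. Conversely, for $l = \iota_k$ the recursion gives $F_{\iota_k+1}(B) = F_{\iota_k-1}(B)$, while the Cassini-type identity $F_{l+1}F_{l-1} + F_l^2 = 1$ over $\F_2$ (immediate from the determinant of the $2\times 2$ Fibonacci matrix) gives $F_{\iota_k+1}(B)\,F_{\iota_k-1}(B) = \Eins$; therefore $F_{\iota_k-1}(B)^2 = \Eins$, and since $F_{\iota_k-1}(B)$ lies in a field of characteristic $2$ it equals $\Eins$. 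Thus $B_{2^{k+1}}^{\iota_k} = \Eins$ and $\operatorname{ord}(B_{2^{k+1}}) = \iota_k$.

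\emph{Conclusion and main obstacle.} Combining the two steps, $\iota_k = \operatorname{ord}(x_k)$ for every $k \in \N$. As noted above, since the Fermat numbers are pairwise coprime with $\prod_{i=0}^{n}\mathcal{F}_i = 2^{2^{n+1}} - 1 = \betrag{E_n} - 1$ and $\operatorname{ord}(x_i) \mid \mathcal{F}_i$, the element $\tilde{x} = x_0 x_1 \cdots x_n$ is primitive in $E_n$ for every $n$ if and only if $\operatorname{ord}(x_i) = \mathcal{F}_i = 2^{2^i}+1$ for every $i \in \N_0$; and $\operatorname{ord}(x_0) = 3 = \mathcal{F}_0$ holds unconditionally ($x_0 \in \F_4 \setminus \F_2$). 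Hence Wiedemann's conjecture is equivalent to $\operatorname{ord}(x_k) = 2^{2^k}+1$ for all $k \in \N$, i.e.\ to $\iota_k = 2^{2^k}+1$ for all $k \in \N$, which is precisely the stated condition on the Fibonacci index of $\chi_{B_{2^k}}$. I expect the main obstacle to be the identification of $x_k$ with $B_{2^{k+1}}$: one must verify that $B_{2^{k+1}}$ genuinely \emph{is} $x_k$, i.e.\ has the same minimal polynomial over $\F_2$ and not merely the same defining quadratic over the previous field, since this is what turns the otherwise purely formal block-matrix identities into a statement about Wiedemann's tower; the accompanying index shift (the Fibonacci index at level $k$ sees the order of $x_k$) is harmless only because the bottom instance $\operatorname{ord}(x_0) = \mathcal{F}_0$ of the conjecture is automatic.
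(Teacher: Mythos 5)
Your proof is correct and follows essentially the same route as the paper: identify the tower $C_{2^k}=B_{2^{k+1}}$ with Wiedemann's $x_k$ and translate the order of $C_{2^k}$ into the Fibonacci index of $\chi_{B_{2^k}}$ via \eqref{C-Fib}. You supply details the paper leaves implicit (the minimal-polynomial induction identifying $x_k$, the Cassini-type identity giving $\operatorname{ord}(B_{2^{k+1}})=\iota_k$ exactly, and the automatic base case $\operatorname{ord}(x_0)=\mathcal{F}_0$), but the underlying argument is the same.
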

\begin{proof}
  Let $C_{2^k}$ be the stabilizer matrix as in \eqref{eqn:defC} using the construction of $B_{2^k}$ from
  \eqref{eqn:B2k}. Obviously, there holds $C_{2^0} + C_{2^0}^{-1} = \Eins$. Now, if $C_{2^k}$ is of order
  $2^{2^k}+1$, the characteristic polynomial of $B_{2^k}$ has Fibonacci index $2^{2^k}+1$. The characteristic
  polynomial of $C_{2^k}$ over the field generated by $B_{2^k} = C_{2^{k-1}}$ is then given by
  $x^2 + C_{2^{k-1}}x + 1$. Identifying $C_{2^k}$ with Wiedemann's $x_k$, the $C_{2^k}$ are the roots of
  these polynomials, which are equivalent to those in Wiedemann's construction. Thus, our problem is an instance
  of his conjecture. \qed
\end{proof}
We have thus shown that we are able to construct a reduced stabilizer matrix $B$, out of which we can construct a unitary operator
$U$ which generates a complete set of cyclic MUBs. In the appendix, we show that this operator can be written in the
form $U = \EZ^{\iE\psi} H^{\x m} P$ with some global phase $\EZ^{\iE\psi}$ and explicitly construct the diagonal
phase system $P$ from $B$ (see also \cite{KRS10}).

\section{Equivalence of sets of mutually unbiased bases}
In this section we shall discuss the equivalence of sets of mutually unbiased bases. At first, we will
introduce methods which are related to those of Calderbank et al. \cite{CCKS97}, Theorem 5.6 and Proposition~5.11;
see also Kantor \cite{Kan11}, Theorem 2.3. Our presentation is self-contained and stated in the language of
operators etc. rather than in terms of discrete geometry. We start with the usual definition of equivalence of MUBs.
\par We shall identify a basis of the Hilbert space $\cH = \C^d$ with a unitary matrix~$\cB$ consisting of the
basis vectors as columns and consider two sets of mutually unbiased bases, $\fS = \Mg{\cB_1,\,\dots,\cB_r}$ and
$\fS^\prime = \Mg{\cB_1^\prime,\,\dots,\cB_r^\prime}$; in our case, $r = d + 1$.
\begin{definition}[Equivalence of mutually unbiased bases]\hfill\\
  We say, the sets $\fS$ and $\fS^\prime$ are equivalent, if there exists a unitary operator~$U$ on~$\cH$,
  a permutation $\pi$ on $\MgE{r}$ and monomial matrices\footnote{A matrix $W$ is called monomial, if $W = D \Pi$,
  where $D = \mathrm{diag}(\lambda_1,\,\dots,\,\lambda_d)$ is diagonal and $\Pi$ a permutation; we assume
  $\betrag{\lambda_i} = 1$, i.\,e. $W$ is unitary.}
  $W_k$, $k \in \MgE{r}$, such that there holds $\cB_k^\prime = U \cB_{\pi (k)} W_k$ for all $k \in \MgE{r}$.
\end{definition}
The essential transformation is the unitary $U$; the matrices $W_k$ take care for the fact that for MUBs we
are more interested in a set of one-dimensional subspaces than in bases as ordered sets and $\pi$ represents
the fact that we consider unordered sets of bases.
\par As outlined in the appendix (for details cf. refs. \cite{BBRV02,KRS10}), our MUBs appear as eigenbases
of operators, i.\,e. $\fC = \Mg{\cC_1,\,\dots,\,\cC_{d+1}}$, where each class $\cC_k$ consists of
$d-1$ commuting Pauli operators. Therefore, $\cC_k = \Mg{\XZ(\vec{v}_1),\,\dots,\,\XZ(\vec{v}_{d+1})}$.
We can now reformulate equivalence in a way that the order of the basis vectors and their total phase---as in
the $W_k$---do not appear.
\begin{lemma}[Equivalence of mutually unbiased bases]\hfill\\
  The MUBs characterized by $\fC = \Mg{\cC_1,\,\dots,\,\cC_{d+1}}$ and
  $\fC^\prime = \Mg{\cC_1^\prime,\,\dots,\,\cC_{d+1}^\prime}$
  are equivalent, if and only if there exists a unitary $U$ on $\C^d$ and a permutation $\pi$ on $\MgE{d+1}$,
  such that $\cC_k^\prime = U\cC_{\pi(k)}U^\dagger$, where $U$ acts simultaneously on all operators within a class
  $\cC_k$.
\end{lemma}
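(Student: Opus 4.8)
The plan is to exploit, in both directions, the dictionary between an operator class $\cC_k$ and its associated basis $\cB_k$ (as set up in the appendix; cf.\ also \cite{BBRV02,KRS10}). The facts I shall use are the following: since the $d-1$ operators of $\cC_k$ commute and, together with $\Eins_d$, generate a maximal abelian subgroup of the Pauli group, their common eigenspaces are one-dimensional. Hence (i) $\cB_k$ is the common eigenbasis of $\cC_k$, unique up to reordering and rephasing of its columns, i.e.\ up to right-multiplication by a monomial matrix; and (ii) the $d$ linearly independent commuting operators $\cC_k\cup\Mg{\Eins_d}$ span the full $d$-dimensional algebra $\mathcal{A}_k$ of all matrices diagonal in the basis $\cB_k$, so that $\cC_k$ is recovered from $\cB_k$ as the set of Pauli operators lying in $\mathcal{A}_k$. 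I shall also use that the $d+1$ classes form a spread, i.e.\ that the associated maximal isotropic subspaces $L_k\subseteq\F_2^{2m}$ (with $\cC_k=\Mge{\XZ(\vec v)}{\vec v\in L_k\setminus\Mg{0}}$) satisfy $L_k\cap L_l=\Mg{0}$ for $k\neq l$ and together cover $\F_2^{2m}$.

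For the implication ``$\Leftarrow$'', assume $\cC_k^\prime=U\cC_{\pi(k)}U^\dagger$ for all $k$. Then $U\cB_{\pi(k)}$ and $\cB_k^\prime$ are both common eigenbases of $\cC_k^\prime$, so by the uniqueness in (i) there is a monomial matrix $W_k$ with $\cB_k^\prime=U\cB_{\pi(k)}W_k$, which is precisely the defining relation for equivalence of the two sets of MUBs.

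For the implication ``$\Rightarrow$'', assume $\cB_k^\prime=U\cB_{\pi(k)}W_k$ with $W_k$ monomial. A monomial factor leaves the lines spanned by the columns unchanged, so $\cB_k^\prime$ and $U\cB_{\pi(k)}$ induce the same maximal abelian algebra; by (ii) this gives $\mathcal{A}_k^\prime=U\mathcal{A}_{\pi(k)}U^\dagger$, and on the orthocomplement of $\Eins_d$ it reads $\mathrm{span}_\C\cC_k^\prime=U\,(\mathrm{span}_\C\cC_{\pi(k)})\,U^\dagger$ for each $k$. It therefore suffices to prove that conjugation by $U$ sends every Pauli operator to a scalar multiple of a Pauli operator: granting this, $UQU^\dagger$ for $Q\in\cC_{\pi(k)}$ is a unit multiple of a single Pauli that lies in $\mathrm{span}_\C\cC_k^\prime$, hence belongs to $\cC_k^\prime$ up to a phase, and the cardinality identity $\betrag{\cC_{\pi(k)}}=\betrag{\cC_k^\prime}=d-1$ promotes this inclusion to $\cC_k^\prime=U\cC_{\pi(k)}U^\dagger$. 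To establish the missing Clifford-type property I plan to use multiplicativity of conjugation: writing $UQU^\dagger=\sum_{R\in\cC_k^\prime}a_RR$ and $UQ^\prime U^\dagger=\sum_{S\in\cC_l^\prime}b_SS$ for some class $l\neq k$ and $Q^\prime\in\cC_{\pi(l)}$, the product $QQ^\prime$ is, up to sign, a single Pauli operator whose class is \emph{not} $\pi(k)$ (since $L_{\pi(k)}\cap L_{\pi(l)}=\Mg{0}$), say $\cC_{\pi(j)}$ with $j\neq k$, so $U(QQ^\prime)U^\dagger\in\mathrm{span}_\C\cC_j^\prime$. On the other hand $U(QQ^\prime)U^\dagger=\pm\bigl(\sum_R a_R R\bigr)\bigl(\sum_S b_S S\bigr)$, and the products $RS$ are pairwise distinct Pauli operators (again because $L_k^\prime\cap L_l^\prime=\Mg{0}$), so no cancellation occurs; if two distinct $R_1,R_2$ had $a_{R_1},a_{R_2}\neq0$, then, for any $S$ with $b_S\neq0$, both $R_1S$ and $R_2S$ would lie in $\cC_j^\prime$, forcing $\vec v_{R_1}+\vec v_{R_2}\in L_j^\prime\cap L_k^\prime=\Mg{0}$, i.e.\ $R_1=R_2$. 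Hence at most one $a_R$ is non-zero, i.e.\ $UQU^\dagger$ is a phase times a Pauli operator, as required.

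The hard part is exactly this last step — showing that the equivalence-implementing unitary $U$ must normalize the Pauli group, which is what makes $U\cC_{\pi(k)}U^\dagger$ meaningful as an operator class and equal to $\cC_k^\prime$ rather than merely a set of matrices. The spread argument above has to be carried through carefully in the symplectic picture, and one must keep track of the unavoidable phases attached to the individual $\XZ(\vec v)$; this is harmless because the monomial matrices $W_k$ in the definition of equivalence already absorb such phases, so the natural level at which the two classes agree is ``up to phases''. The remaining parts of the proof are routine bookkeeping with monomial matrices.
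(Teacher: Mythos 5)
Your proof is correct, but it does substantially more work than the paper's own proof, which consists of two sentences: the observation that the ordering and phases encoded in the $W_k$ are invisible at the level of eigensystems, plus the fact that $Uv$ is an eigenvector of $UAU^\dagger$ if and only if $v$ is one of $A$. That covers your ``$\Leftarrow$'' direction and the first half of your ``$\Rightarrow$'' direction (the identification $\mathcal{A}_k^\prime = U\mathcal{A}_{\pi(k)}U^\dagger$ of the maximal abelian algebras), but the paper never proves that the equivalence-implementing $U$ sends Pauli operators to (phases times) Pauli operators; it merely asserts this in the paragraph following the lemma (``the conjugation by $U$ must map Pauli operators onto Pauli operators, i.e.\ $U$ belongs to the Clifford group''). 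Your spread argument --- expanding $UQU^\dagger$ over the Pauli operators of $\mathrm{span}_\C\cC_k^\prime$, multiplying by the image of an operator from a second class, and using that the products $RS$ are pairwise distinct Pauli operators together with $L_j^\prime\cap L_k^\prime=\Mg{0}$ for $j\neq k$ to exclude two nonvanishing coefficients --- is a sound proof of exactly that assertion, so your version closes a gap the paper leaves open rather than merely rederiving its argument. The caveat you flag about signs is real (e.g.\ $HYH^\dagger=-Y$, so $U\cC_{\pi(k)}U^\dagger$ need not literally equal $\cC_k^\prime$ as a set of signed operators), but the paper shares it: the equality $\cC_k^\prime=U\cC_{\pi(k)}U^\dagger$ must be read modulo the center, which is consistent with the paper's subsequent passage treating the Pauli operators as the Pauli group factored by $\Mg{\pm1,\pm\iE}\Eins$, i.e.\ as a projective representation of $V_S$. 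In short: same eigenvector dictionary for the easy direction, but your treatment of the hard direction is a genuine and correct strengthening of what the paper writes down.
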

\begin{proof}
Assume that the eigensystems of $\fC$ and $\fC^\prime$ are equivalent MUBs. As the ordering within a basis
does not appear here, the $W_k$ are irrelevant, and the unitary $U$ is the same as in the definition of
equivalence, since $Uv$ is an eigenvector of $UAU^\dagger$, if and only if $v$ is one of $A$. \qed
\end{proof}
As we assume $\fC$ and $\fC^\prime$ to consist of Pauli operators only, the conjugation by~$U$ must map Pauli
operators onto Pauli operators, i.\,e. $U$ belongs to the Clifford group. We can view the Pauli
operators---essentially the Pauli group factorized by its center
$\Mg{\pm 1,\,\pm\iE}\Eins$---as a projective representation of the symplectic space $V_S := \F_2^m \times \F_2^m$,
and the unitary $U$ thus corresponds to an invertible linear mapping $f$ on $V_S$, i.\,e. $f \in M_{2m}(\F_2)$.
In order to preserve the commutation relations, we require $f$ to be symplectic.
\par In general, the classes $\cC_k$ with $d-1 = 2^m-1$ elements are generated by $m$ elements. In terms of $V_S$
we have $m$ vectors which generate a subspace of $V_S$ with $d-1$ nonzero elements.
Viewing a class $\cC_k$ as a $2m \times (d-1)$-matrix, where the columns represent the Pauli operators
(as in the appendix), we find a
(non-unique) generator $G_k$, which is a $2m \times m$-matrix, and write $\Erz{G_k} = \cC_k$. As we can permute
the columns within the $\cC_k$ by a permutation $Q_k$, we find that $\fC$ and $\fC^\prime$ are equivalent, if and
only if there exists a mapping~$f$
and a permutation $\pi$ on $\MgN{d}$, such that $\Erz{G_k^\prime} = f \Erz{G_{\pi(k)}} Q_k$ for all $k$.\par
We can assume that the matrix representation of $f$ is of the block-matrix form ${\tiny\begin{pmatrix} s & t \\
u & v \end{pmatrix}}$; symplecticity is equivalent to the three conditions, that $s^tu$ and $t^tv$ are symmetric
and $v^ts -t^tu = \Eins_m$. Since in cyclic MUBs,
we can rotate through the bases, we can choose $\pi(0) = 0$, such that $\Erz{G_0} = \Erz{G_0^\prime}$ consists of
all $Z$-type Pauli operators with the standard basis as their eigenbasis. This implies $u=0$ and, to keep
symplecticity, that $v=(s^t)^{-1}$.
\par The elements within a class $\cC_k$ are given by \cite{KRS10}, i.\,e.
\begin{align}\label{eqn:classes}
  \cC_k = \{C^k \cdot (\Eins_m, 0_m)^t \cdot c : c \in \F_2^m \}.
\end{align}
The generators of the classes $\cC_k$ can be grouped into the generator of the $Z$-type Pauli operators
$G_0=(\Eins_m, 0_m)^t$ and the remaining generators $G_k=C^k \cdot (\Eins_m, 0_m)^t$ with $k \in \MgE{d}$,
which can be written as $G_k= (F_{k+1}(B), F_k(B))^t$ with the reduced stabilizer matrix $B$. For a complete set of MUBs, the generators
$G_k$ are constructed in a way, that for $k \in \MgE{d}$ the polynomial $F_k(B)$ is non-zero and invertible (cf. prop. \ref{FieldRep}).
Since any invertible operator that acts in \eqref{eqn:classes} on the elements $c$, is given by a permutation of those elements,
we will produce the same classes $\cC_k$, if we replace $G_k$ for $k \in \MgE{d}$ by
\begin{align}\label{StdForm}
 \bar G_k= \begin{pmatrix}p_k(B)\\ \Eins_m\end{pmatrix},
\end{align}
with $p_k(B)=F_{k+1}(B)\cdot F^{-1}_k(B)$. We call this the standard form, which fits into the construction of
\cite{BBRV02}. As $B$ is a root of its minimal polynomial with degree $m$, the $2^m$ different elements
$p_k(B)$ represent the extension field $\F_2^m$, i.\,e., the $p_k$ are all $2^m$ polynomials over $\F_2$ with
degree less than $m$.\par
By expressing the generators of a set of MUBs in the standard form we avoid the use of permutations $Q_k$.
Using the standard form, we can show to be free in the explicit choice of the symmetric matrix $B$
for a given irreducible polynomial.
\begin{lemma}[Equivalent sets of MUBs in certain cases]\label{EquivLemma}\hfill\\
  All symmetric matrices $B \in M_m(\F_2)$ with the same irreducible characteristic polynomial $p$
  generate equivalent sets of MUBs.
\end{lemma}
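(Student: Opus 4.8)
The plan is to use the standard-form description of the classes developed just above the lemma. Fix an irreducible polynomial $p \in \F_2[x]$ of degree $m$ and let $B, B' \in M_m(\F_2)$ be two symmetric matrices both having $p$ as characteristic (hence minimal) polynomial. By Proposition \ref{FieldRep}, both $\F_2[B]$ and $\F_2[B']$ are isomorphic to the splitting field $L = \F_{2^m}$ of $p$; moreover $B$ and $B'$ are each a fixed root of $p$ in their respective copies of $L$. Since $\F_2$ is perfect and $p$ is separable, any two roots of $p$ in $L$ are related by a power of the Frobenius automorphism $x \mapsto x^2$. Thus there is a field isomorphism $\varphi \colon \F_2[B] \to \F_2[B']$ with $\varphi(B) = B'$, and it is $\F_2$-linear, so it is given by an invertible matrix on the underlying $\F_2^m$; concretely, $\varphi$ sends $q(B) \mapsto q(B')$ for every polynomial $q$. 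In particular the Fibonacci polynomials match up: $\varphi(F_k(B)) = F_k(B')$ for all $k$, and likewise for the standard-form coefficients $p_k(B) = F_{k+1}(B) F_k(B)^{-1} \mapsto p_k(B')$.

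Next I would package $\varphi$ into a symplectic map $f$ on $V_S = \F_2^m \times \F_2^m$ of the block form discussed above, namely $f = {\tiny\begin{pmatrix} s & 0 \\ 0 & (s^t)^{-1}\end{pmatrix}}$, and pin down $s$ so that $f$ carries the standard-form generators $\bar G_k = (p_k(B), \Eins_m)^t$ to $\bar G_k' = (p_k(B'), \Eins_m)^t$, with the \emph{same} index $k$ — i.e. we can take the permutation $\pi$ to be the identity. Acting by $f$ on $\bar G_k$ gives $(s\, p_k(B), (s^t)^{-1})^t$; after rescaling the generator on the right by the invertible matrix $s^t$ (which, as noted in the paragraph on \eqref{eqn:classes}, only permutes the nonzero elements of the class and hence does not change $\cC_k$), this becomes $(s\, p_k(B)\, s^t, \Eins_m)^t$. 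So I need a single invertible symmetric-preserving choice of $s$ with $s\, p_k(B)\, s^t = p_k(B')$ for all $k$ simultaneously. Because the $p_k$ range over \emph{all} polynomials of degree $<m$ and $\varphi$ is a ring isomorphism, it suffices to achieve this for a generating set, e.g. for $p_k(B) = \Eins_m$ (forcing $s s^t = \Eins_m$, i.e. $s$ orthogonal) and for $p_k(B) = B$ itself (forcing $s B s^t = B'$); multiplicativity then propagates it to all powers and hence all polynomials.

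The existence of such an $s$ is exactly the classical statement that two symmetric matrices over $\F_2$ with the same irreducible characteristic polynomial are congruent by an orthogonal matrix — equivalently, the symmetric bilinear form "twisted" by $B$ and the one twisted by $B'$ on the field $L$ are isometric. This can be cited from the same circle of results as \cite{BT98} (symmetric realizations of companion matrices over finite fields), or proved directly: transport the form along $\varphi$ and use that over a finite field all nondegenerate symmetric bilinear forms of a given rank, in characteristic $2$ of the relevant type, are equivalent. This is the one genuinely non-formal input and is where I expect the main obstacle to lie; everything else is bookkeeping with the standard form. Once $s$ is in hand, $f = {\tiny\begin{pmatrix} s & 0 \\ 0 & (s^t)^{-1}\end{pmatrix}}$ is symplectic, fixes $\Erz{G_0}$, and sends $\Erz{G_k}$ to $\Erz{G_k'}$ for every $k$, so by the reformulated equivalence criterion (the lemma on equivalence of MUBs characterized by $\fC$ and $\fC'$) the two sets of MUBs are equivalent. $\qed$
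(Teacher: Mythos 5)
Your proposal is correct and follows essentially the same route as the paper: both reduce the claim to the existence of an orthogonal $s \in M_m(\F_2)$ with $B' = sBs^t$ and then apply the block-diagonal symplectic map $f = {\tiny\begin{pmatrix} s & 0 \\ 0 & s\end{pmatrix}}$ (which equals your ${\tiny\begin{pmatrix} s & 0 \\ 0 & (s^t)^{-1}\end{pmatrix}}$ since $s$ is orthogonal) to the standard-form generators. The orthogonal-similarity fact you flag as the one non-formal input is likewise simply asserted in the paper's proof, so your treatment is, if anything, slightly more explicit about why congruence by an orthogonal $s$ propagates from $B$ to every $p_k(B)$.
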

\begin{proof}
  Consider two symmetric matrices $B$ and $B^\prime$ with characteristic polynomial $p$. We can
  find some orthogonal matrix $s \in M_m(\F_2)$ (i.\,e. $s^t = s^{-1}$), such that $B^\prime = s B s^t$
  (orthogonal similarity). The matrix $f={\tiny\begin{pmatrix} s & 0 \\ 0 & s \end{pmatrix}}$ applied
  on the standard form $(p_k(B),\,\Eins)^t$ of eq. (\ref{StdForm}) results in $(s\,p_k(B),\,s)^t$ which
  in the standard form reads
  $(s\,p_k(B)\,s^t,\,\Eins)^t = (p_k(B^\prime),\,\Eins)^t$ showing equivalence in this case. \qed
\end{proof}
Using this lemma, we can relate even more sets, i.\,e. all sets which can be written in the standard form
of eq. \eqref{StdForm} with polynomials $p_k$ and some symmetric matrix $B$.

\pagebreak

\begin{theorem}[Equivalent sets of MUBs]\label{EquivTheorem}\hfill\\
 All complete sets of MUBs which are eigenbases of classes $\fC = \Mg{\cC_0,\,\dots,\,\cC_d}$
 generated by $G_0=(\Eins_m, 0_m)^t$ and $\bar G_k= (p_k(B), \Eins_m)^t$ for $k \in \MgE{d}$, some symmetric
 matrix $B \in M_m(\F_2)$ and $p_k$ being all $d = 2^m$ polynomials over $\F_2$ with degree
 less then $m$ are equivalent.
\end{theorem}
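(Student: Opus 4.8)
The plan is to reduce the claim to Lemma~\ref{EquivLemma}, which already disposes of two symmetric matrices sharing the \emph{same} irreducible characteristic polynomial; the only thing to add is that the freedom in choosing the characteristic polynomial is harmless. The observation that makes this work is that, because in the standard form \eqref{StdForm} the $p_k$ run over \emph{all} $2^m$ polynomials of degree less than $m$, the family of classes $\fC$ attached to a symmetric matrix $B$ depends on $B$ only through the subalgebra $\F_2[B]=\{\,p_k(B):\deg p_k<m\,\}\subseteq M_m(\F_2)$. For these classes to form a complete set of MUBs the $2^m$ matrices $p_k(B)$ must be pairwise distinct and every nonzero one among them invertible, i.e.\ $\F_2[B]$ must be a field; hence by Proposition~\ref{FieldRep} the polynomial $\chi_B$ is irreducible of degree $m$ and $\F_2[B]\cong\F_{2^m}$. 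Consequently, if $\tilde B=g(B)$ for some $g\in\F_2[x]$ with $\F_2[g(B)]=\F_2[B]$, then $B$ and $\tilde B$ generate literally the same family $\fC$.

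First I would fix two symmetric matrices $B,\,B'$ as in the statement, with irreducible characteristic polynomials $p$ and $p'$ of degree $m$. Since $\F_2[x]/(p')$ is a field with $2^m$ elements it is isomorphic to $\F_2[B]\cong\F_{2^m}$, and therefore $\F_2[B]$ contains an element $\beta$ whose minimal polynomial over $\F_2$ is $p'$ (this is the standard fact that $\F_{2^m}$ contains a root of every irreducible polynomial of degree $m$ over $\F_2$; cf.\ \cite[th.~3.20]{LN08}). Writing $\beta=g(B)$ with $g\in\F_2[x]$, this $\beta$ is symmetric (a polynomial in the symmetric $B$), has characteristic polynomial $p'$ (its minimal polynomial $p'$ is already of degree $m$), and satisfies $\F_2[\beta]=\F_2[B]$ since $[\F_2(\beta):\F_2]=m=[\F_2[B]:\F_2]$. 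By the observation above, $\beta$ and $B$ generate the same family $\fC$ of classes.

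Next I would apply Lemma~\ref{EquivLemma} to $\beta$ and $B'$: these are symmetric matrices with the \emph{same} irreducible characteristic polynomial $p'$, so they generate equivalent sets of MUBs. Combining this with the previous step shows that the sets of MUBs generated by $B$ and by $B'$ are equivalent; since $B,\,B'$ were arbitrary and equivalence of sets of MUBs is transitive (compose the unitaries, the permutations and the monomial matrices), all sets described in the theorem lie in one equivalence class, which is the assertion.

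I expect the only point needing care to be the bookkeeping in the first step. One must check that replacing $B$ by $g(B)$ preserves exactly the hypotheses of the standard form: symmetry is immediate, and the requirement that the $2^m$ values $p_k(g(B))$ be distinct is equivalent to $\F_2[g(B)]=\F_2[B]$, which holds precisely because $\beta=g(B)$ has a minimal polynomial of degree $m$. One must also be sure that ``same subalgebra $\F_2[B]$'' forces ``same unordered family $\fC$'': this is clear because each class $\cC_k=\Erz{(p_k(B),\Eins_m)^t}$ is determined by the single matrix $p_k(B)$, in both cases the $p_k$ sweep out all of $\F_2[B]$, and equivalence of MUBs ignores how the individual bases are labelled. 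Beyond that, the proof is just an appeal to Lemma~\ref{EquivLemma} together with the uniqueness of $\F_{2^m}$.
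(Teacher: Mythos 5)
Your proof is correct and follows essentially the same route as the paper's: reduce to Lemma~\ref{EquivLemma} via the uniqueness of $\F_{2^m}$, realizing the second irreducible polynomial $p'$ as the minimal polynomial of a symmetric matrix that generates the same set of classes as $B$. In fact your writeup is more explicit than the paper's at the key step --- the paper merely asserts that the elements of $\F_2(\beta)$ and $\F_2(\beta')$ are ``equivalent up to permutation'', whereas you make this precise by exhibiting $\beta=g(B)\in\F_2[B]$ with minimal polynomial $p'$ and observing that the unordered family $\fC$ depends on $B$ only through the subalgebra $\F_2[B]$.
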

\begin{proof}
  Let $p$ and $p^\prime$ be two different irreducible polynomials of degree $m$ over the ground field $\F_2$
  with roots $\beta$ and $\beta^\prime$, respectively. By adjunction of $\beta$ to $\F_2$ we generate the field
  $\F_2(\beta) \cong \F_2[x]/p\F_2[x] \cong \F_{2^m}$. The same holds true for the adjunction of $\beta'$ in
  a similar fashion. Since both extensions are Galois extensions (cf. \cite[chap. 4.1]{Bosch}),
  the elements in $\F_2(\beta)$ and $\F_2(\beta')$ are equivalent up to
  permutation. If we represent $\beta$ as a symmetric matrix $B \in M_m(\F_2)$, such that the minimal polynomial of $B$
  equals the minimal polynomial of $\beta$, we are free in the explicit choice of $B$ as seen in lemma \ref{EquivLemma}.
\qed
\end{proof}
As mentioned in \cite[chap. 4.1]{BBRV02}, the choice of a basis to produce the elements $p_k(B)$ implies a field
structure and generates the MUBs of Wootters and Fields \cite{WF89}. By theorem \ref{EquivTheorem} they are
equivalent to the construction given here, and by \cite{GR09} they are also equivalent to the bases which were
generated by Klappenecker and R{\"o}tteler \cite{KR04} and by Bandyopadhyay et al. \cite{BBRV02}. 
Thus, many of the known MUBs can be constructed with cyclic structure, which yields to a short description
and allows a simple implementation in experiments like the quantum circuit given in \cite{SR11}.

\section{Conclusions}
In this paper we have shown how to relate the problem of constructing unitary generators of cyclic mutually unbiased bases to Fibonacci
polynomials and their properties. In particular, we have shown that the problem for $d = 2^m$ can be reduced to
finding a symmetric matrix with irreducible characteristic polynomial of degree $m$ with non-zero linear
coefficient, which is known to exist. If $m$ itself is of the form $m = 2^k$, we give an explicit construction of
such matrix, provided Wiedemann's conjecture is true.
We have proven that all complete sets of cyclic mutually unbiased bases our scheme provides, are equivalent and belong
to the same class as the mutually unbiased bases by Wootters and Fields.
The methods of this work are used in another article \cite{SR11}, where we give a simple direct construction
of the unitary generator $U$, which can be applied without reference to finite field theory and show how this can be translated into
a quantum circuit, which may be useful in experiments.

\appendix
\section{Construction of the MUB-generating unitary operator}\label{app:unitary}
In this appendix we will review and slightly reformulate the construction of the unitary matrix $U \in M_d(\C)$
from $C \in M_{2m}(\F_2)$ or $B \in M_m(\F_2)$, respectively, up to a global phase $\EZ^{\iE\psi}$ \cite{KRS10}.
\par On the Hilbert space $\cH = \C^p$, $p \in \N$ prime, with canonical orthonormal basis
$\Mge{\ket{i}}{i \in \F_p}$, we define
\begin{equation}
  X\ket{i} := \ket{i+1} \quad\text{and}\quad Z\ket{i} := \omega_p^i\ket{i}
\end{equation}
with $\omega_p := \exp(2\pi\iE/p)$.
For $\vec{a} := (\vec{a}_z|\vec{a}_x)^t \in \F_p^{2m}$ with \mbox{$\vec{a}_\ny = (a_1^{(\ny)},\,\dots,\,a_m^{(\ny)})^t \in \F_p^m$} and $\ny \in \Mg{z,\,x}$ we define the set of \emph{Pauli operators} (which is not the Pauli group!)
by $P_p^m := \Mge{\XZ(\vec{a})}{\vec{a} \in \F_p^{2m}}$ with
\begin{equation}
  \XZ(\vec{a}) := f(\vec{a}) \cdot X^{a_1^{(x)}}Z^{a_1^{(z)}} \x \dots \x X^{a_m^{(x)}}Z^{a_m^{(z)}},
\end{equation}
where $f(\vec{a}) := \iE^{a_1^{(x)}a_1^{(z)} + \dots + a_m^{(x)}a_m^{(z)}}$, if $p = 2$, and $f(\vec{a}) = 1$
otherwise. This implies $f(\vec{a})^{-1}f(\vec{b})^{-1}\XZ(\vec{a})\XZ(\vec{b})
= \omega_p^{a_1^{(z)}b_1^{(x)} + \dots + a_m^{(z)}b_m^{(x)}} f(\vec{a}+\vec{b})^{-1}\XZ(\vec{a}+ \vec{b})$,
and it follows $\XZ(\vec{a})\XZ(\vec{b}) = \omega_p^{\symp{\vec{a}}{\vec{b}}}
\XZ(\vec{b})\XZ(\vec{a})$, where $\Fkt{\symp{\,\cdot\,}{\,\cdot\,}}{\F_p^{2m} \times \F_p^{2m}}{\F_p}$
with $\symp{\vec{a}}{\vec{b}} := \sum_{k = 1}^{n} a_k^{(z)} b_k^{(x)} - a_k^{(x)} b_k^{(z)}$ denotes the
\emph{symplectic product}.
\par Consider the vectors $\vec{z}_i,\,\vec{x}_i \in \F_p^{2m}$, $i \in \MgN{m-1}$, where the $i$-th component
of the first resp. second half of the vector is one and all other entries vanish. These vectors form
a \emph{symplectic} (or \emph{hyperbolic}) \emph{basis} of $\F_p^{2m}$, i.\,e. for all $i,\,j \in \MgE{p-1}$
there hold the equations $\symp{\vec{x}_i}{\vec{x}_j} = \symp{\vec{z}_i}{\vec{z}_j} = 0$ and
$\symp{\vec{x}_i}{\vec{z}_j} = \delta_{ij}$.
For a matrix $C \in M_{2m}(\F_p)$ the images $\vec{\ny}_i^\prime := C \vec{\ny}_i^\prime$, $\ny \in \Mg{z,\,x}$,
form a symplectic basis, if and only if $C$ is a symplectic matrix, i.\,e. $C^tSC = S$ for
$S = {\tiny\begin{pmatrix} 0_m & \Eins_m \\ -\Eins_m & 0_m \end{pmatrix} \in M_{2m}(\F_p)}$.
\par For any symplectic basis, we can define \emph{logical operators} by $\overline{X}_i := \XZ(C\vec{x}_i)$
and $\overline{Z}_i := \XZ(C\vec{z}_i)$ for $C$ as above. By definition, the \emph{logical state}
$\ket{0}_L$ is the joint eigenstate of all $\overline{Z}_i$ with eigenvalue $+1$, and we get the other logical
states by $\ket{j}_L = \overline{X}_1^{j_1} \x \dots \x \overline{X}_1^{j_m}\ket{0}_L$ for
$j = (j_1,\,\dots,\,j_m)^t \in \F_2^m$. If we choose the block matrix form $C = {\tiny\begin{pmatrix} B & \Eins_m \\ \Eins_m & 0_m \end{pmatrix}} \in M_{2m}(\F_2)$ with $B \in M_m(\F_2)$, the matrix $C$ is symplectic, if and only
if $B$ is symmetric, and we simply have $\overline{Z}_i = X_i$ for all $i \in \MgE{m}$ and $\ket{0}_L
= 2^{-m/2}\sum_{j \in \F_2^m}\ket{j}$ (unique up to a global phase).
\par Given a partition $P_p^m = \Mg{\Eins} \cup \cC_0 \cup \dots \cup \cC_d$, where the sets $\cC_k$ are
pairwise disjoint and contain $d-1$ commuting operators each, the eigenbases of the $\cC_k$ are
mutually unbiased \cite{BBRV02}. Let $\cC_0 := \Mge{\XZ(\vec{a}_z|0)}{\vec{a}_z \in \F_2^m \setminus \Mg{0}}$
be the non-identity \emph{$Z$-type operators} and suppose $U \in M_d(\C)$ is unitary such that
\mbox{$U\cC_kU^\dagger = \cC_{(k+1) \mod (d+1)}$} for all $k \in \MgN{d}$; then, $U^{d+1} = U^0 = \Eins_d$,
and the columns of $U^k$, $k \in \MgN{d}$, are the eigenbases of the $\cC_k$, i.\,e., $U$ is the generator
of cyclic MUBs. (In particular, $U$ belongs to the \emph{Clifford group}, i.\,e. the group of unitary operators
which leave the set of Pauli operators invariant, possibly up to a $p$-th root of unity: $\mathfrak{C}_p^m
:= \bigl\{ U \in \mathfrak{U}(p^m) \,|\, (\forall \vec{a} \in \F_p^{2m}) \bigr. $
$\bigl. (\exists \vec{b} \in \F_p^{2m},\,k \in \F_p) (U \XZ(\vec{a}) U^\dagger = \omega_p^k \XZ(\vec{b})) \bigr\}$.)
Thus, for all $j \in \F_2^m$ there holds $U\ket{j} = \ket{j}_L = \prod_{k = 1}^{m} \XZ(C\vec{x}_k)^{j_k} \ket{0}_L$,
and
\begin{equation}\label{ConstrU}
  U = 2^{-m/2}\sum\nolimits_{i,j \in \F_2^m} \prod\nolimits_{k = 1}^{m} \XZ(C\vec{x}_k)^{j_k} \ket{i}\bra{j}.
\end{equation}
With $B = (b_{ij})_{i,j = 1}^{m}$ as above, we have $\XZ(C\vec{x}_k) = (\bigotimes_{i = 1}^{k-1}
X^{b_{ik}}) \x \iE^{b_{kk}} X^{b_{kk}} Z \x (\bigotimes_{i = k+1}^{m} X^{b_{ik}})$, and the $k$-th tensor
factor in \eqref{ConstrU} is given by
$X^{b_{1k}j_1 + \cdots + b_{kk}j_k} \cdot Z^{j_k} \cdot X^{b_{k+1,k}j_{k+1} + \cdots + b_{mk}j_m}$, where
we can shift the factor $Z$ to the beginning, which yields
$(-1)^{b_{1k}j_1j_k + \dots + b_{kk}j_k^2} \cdot Z^{j_k} \cdot X^{b_{1k}j_1 + \cdots + b_{mk}j_m}$.
By defining new abbreviations
$p_j := \iE^{\sum_{k = 1}^{m} b_{kk}j_k} (-1)^{\sum_{k = 1}^{m} b_{1k}j_1j_k + \dots + b_{kk}j_k^2}$
and $\tilde{X}_j := \bigotimes_{k = 1}^{m} X^{b_{1k}j_1 + \cdots + b_{mk}j_m}$, we find
\begin{equation}
  U = 2^{-m/2}\sum\nolimits_{i,j \in \F_2^m} p_j (Z^{j_1} \x \dots \x Z^{j_m}) \tilde{X}_j \ket{i}\bra{j};
\end{equation}
we can pull through the sum over $i$ and see that $\sum_i \ket{i}$ is invariant under $\tilde{X}_j$, which thus
can be absorbed. This results in $U = 2^{-m/2} \sum\nolimits_{ij} p_j (-1)^{ij} \ket{i}\bra{j}$
or, using the Hadamard matrix $H = {\tiny\begin{pmatrix} 1 & 1 \\ 1 & -1 \end{pmatrix}}/\sqrt{2} \in M_2(\C)$,
in $U = H^{\x m}P$, with diagonal \emph{phase system} matrix $P = \mathrm{diag}\bigl((p_j)_{j \in \F_2^m}\bigr)$.
Since $B$ is symmetric, we find $(-1)^{\sum_{k = 1}^{m} b_{1k}j_1j_k + \dots + b_{kk}j_k^2}
= \iE^{\bra{j}B\ket{j}} \iE^{\sum_{k = 1}^{m} b_{kk}j_k^2}$ with $\bra{j}B\ket{j}
= {\sum_{k,\,l = 1}^{m} b_{kl}j_k j_l}$ being a quadratic form, which leads to
$p_j := \iE^{\bra{j}B\ket{j}} (-1)^{\sum_{k = 1}^{m} b_{kk}j_k^2}$.

\begin{acknowledgement}
  The authors acknowledge funding by CASED and by BMBF/QuOReP.
\end{acknowledgement}

\end{document}